\numberwithin{equation}{section}
\newtheorem{theorem}{Theorem}[section]
\newtheorem{lemma}[theorem]{Lemma}
\newtheorem{proposition}[theorem]{Proposition}
\newtheorem{corollary}[theorem]{Corollary}
\theoremstyle{definition}
\newtheorem{definition}[theorem]{Definition}
\newtheorem{example}[theorem]{Example}
\newtheorem{remark}[theorem]{Remark}
\newcommand{\C}{\mathbf{C}}
\newcommand{\Z}{\mathbf{Z}}
\newcommand{\Mg}{M_g}
\newcommand{\Hg}{H_g}
\newcommand{\Phig}{\Phi_g}
\newcommand{\Gr}{\mathop{\rm Gr}\nolimits}
\newcommand{\Sym}{\mathop{\rm Sym}\nolimits}
\newcommand{\Div}{\mathop{\rm Div}\nolimits}
\newcommand{\divi}{\mathop{\rm div}\nolimits}
\newcommand{\aj}{\mathop{\rm aj}\nolimits}
\newcommand{\pt}{\operatorname{\partial}}
\newcommand{\Image}{\operatorname{Im}}
\newcommand{\diff}{{\rm d }}
\newcommand{\X}{\mathcal X}
\newcommand{\cO}{\mathcal O}
\newcommand{\PB}{\left\{\cdot\,,\cdot\right\}}
\newcommand{\Pb}[1]{\left\{\cdot\,,#1\right\}}
\begin{document}

\parskip 4pt
\baselineskip 16pt


\title[Singular fiber of the Mumford system]
{Singular fiber of the Mumford system 
\\ and rational solutions to the KdV hierarchy}

\author[Rei Inoue]{Rei Inoue${}^1$}
\address{Rei Inoue, 
Faculty of Pharmaceutical Sciences,
Suzuka University of Medical Science, 3500-3 Minami-tamagaki, Suzuka, 
Mie 513-8670, Japan}
\email{reiiy@suzuka-u.ac.jp}
\thanks{${}^1$Partially supported by 
Grand-in-Aid for Young Scientists (B) (19740231)}
\author[Pol Vanhaecke]{Pol Vanhaecke${}^2$}
\address{Pol Vanhaecke, Laboratoire de Math\'ematiques et Applications, 
UMR 6086 du CNRS, Universit\'e de Poitiers, 
Boulevard Marie et Pierre CURIE, BP 30179, 86962 Futuroscope Chasseneuil
Cedex, France}
\email{pol.vanhaecke@math.univ-poitiers.fr}
\thanks{${}^2$Partially supported by 
a European Science Foundation grant (MISGAM), 
a Marie Curie grant (ENIGMA) and an ANR grant (GIMP)}

\author[Takao Yamazaki]{Takao Yamazaki${}^3$} 
\address{Takao Yamazaki, Mathematical Institute, Tohoku University,
  Aoba, Sendai 980-8578, Japan}
\email{ytakao@math.tohoku.ac.jp}
\thanks{${}^3$Partially supported by 
Grand-in-Aid for Young Scientists (B) (19740002)}

\subjclass[2000]{53D17, 37J35, 14H70, 14H40}

\keywords{Integrable systems; Generalized Jacobians; Rational solutions}


\begin{abstract}
We study the singular iso-level manifold $M_g(0)$
of the genus $g$ Mumford system
associated to the spectral curve $y^2=x^{2g+1}$.  
We show that $M_g(0)$ is stratified by $g+1$ open subvarieties of additive
algebraic groups of dimension $0,1,\dots,g$ 
and we give an explicit description of $M_g(0)$ in terms of the
compactification of the generalized Jacobian.
As a consequence, we obtain an effective algorithm 
to compute rational solutions to the genus $g$ Mumford system,
which is closely related to rational solutions of the KdV hierarchy.
\end{abstract}

\maketitle

\section{Introduction}

The notion of algebraic integrability has been introduced by Adler and van Moerbeke in order to provide a natural
context in which basically all classical examples of integrable systems naturally fit (after complexification) and
they have developed techniques for studying the geometry and the explicit integration of these systems
\cite{adler_adv, adler_adv2, avm}. The main feature of an algebraic completely integrable system (a.c.i.\ system)
is that the \emph{generic} fiber of its complex momentum map (the map which is defined by the Poisson commuting
integrals) is an affine part of an Abelian variety (compact complex algebraic torus); in addition, the
corresponding Hamiltonian vector fields are demanded to define translation 
invariant vector fields on these
tori. One important consequence is that the integration of the equations of motion, starting from a \emph{generic}
point, can be done in terms of theta functions, such as the classical Riemann theta function. A widely known
example of an a.c.i.\ system is the Euler top, which Euler integrated in terms of elliptic functions.

Particular \emph{special (non-generic)} fibers of a the moment map of an a.c.i.\ system are in general not affine
parts of an Abelian variety. According to a conjecture, stated in \cite[p.\ 155]{avm}, such a fiber is made up by
affine parts of one or several algebraic groups, defined by the flows of the integrable vector fields. The
solutions starting from a point on such a fiber are then expressed in terms of a degeneration of the theta
function, such as exponential or rational functions. When the generic fiber of the a.c.i.\ system is the Jacobian
of a Riemann surface, so that the solution is expressed in terms of its Riemann theta function, one is tempted to
relate the algebraic groups that make up a special fiber to a generalized Jacobian, i.e., the Jacobian of a
singular algebraic curve.  Then the function theory of these Jacobians provides the algebraic functions in which
the corresponding special solution can be expressed.  In the case which we will study in this paper, the zero-fiber
of the genus $g$ Mumford system, the singular curve is of the form $y^2=x^{2n+1}$ (where $n\leqslant g$) and the
entire zero-fiber admits, according to a result by Beauville \cite{Beauville90}, a natural description as an affine
part of the compactification of the generalized Jacobian of the curve $y^2=x^{2g+1}$. We will show that the
corresponding solutions of the Mumford system are rational functions of all time variables and we will give
explicit formulas for these solutions. See \cite{audin,fedorov,gavrilov} for other works on integrable systems
involving generalized Jacobians.

Recall \cite{Mumford-Book,Van1638} that for a fixed positive integer $g$, the phase space $\Mg$ of the Mumford
system is given by
\begin{align}\label{Mg}
  \Mg =
  \left\{ \ell(x) = 
  \begin{pmatrix}
    v(x) & w(x) \\
    u(x) & -v(x)
  \end{pmatrix}
  ~\Bigl|~
  \begin{array}{l}
    u(x) = x^g + u_{g-1} x^{g-1} + \cdots + u_0, \\
    v(x) = v_{g-1} x^{g-1} + \cdots + v_0, \\
    w(x) = x^{g+1} + w_g x^g + \cdots + w_0
  \end{array}
  \right\} 
  (\cong \C^{3g+1}),
\end{align}
equipped with a Poisson structure $\PB$. We have the momentum map 
\begin{equation*}
  \Phig:\Mg\to\Hg \ : \
  \ell(x) \mapsto -\det (\ell(x)),
\end{equation*}
where $\Hg \cong \C^{2g+1}$ is given by 
\begin{equation}\label{Hg}
  \Hg=\{h(x) =x^{2g+1} + h_{2g} x^{2g} 
      + h_{2g-1} x^{2g-1}+ \cdots + h_0 ~|~ h_0,\dots,h_{2g}\in\C \}.
\end{equation}
Out of the $2g+1$ independent functions $h_0, \dots, h_{2g+1}$ on $\Mg$, $g+1$ functions $h_g, \dots, h_{2g}$ are
Casimirs, and the $g$ other functions $h_0, \dots, h_{g-1}$ define commuting Hamiltonian vector fields $\X_1,
\cdots, \X_g$.  This implies, since the generic rank of $\PB$ is $2g$ on $M_g$, that the system $(\Mg, \PB, \Phig)$
is a Liouville integrable system. For $h(x)\in\Hg$, let $C_g(h)$ denote the integral projective (possibly singular)
hyperelliptic curve of (arithmetic) genus $g$, given by the completion of the affine curve $y^2 = h(x)$ with one
smooth point at infinity.  The main feature of the Mumford system is that, when $C_g(h)$ is non-singular, there is
an isomorphism between the level set $M_g(h) := \Phig^{-1}(h)$ and the complement of the theta divisor in the
Jacobian variety $J_g(h)$ of $C_g(h)$, which transforms the Hamiltonian vector fields $\X_1, \cdots, \X_g$ into the
translation invariant vector fields on $J_g(h)$.
This shows that the Mumford system is a.c.i. 
For singular curves, according to Beauville \cite{Beauville90}, the same result holds true, upon
replacing the Jacobian by the compactified generalized Jacobian (and the theta divisor by its completion in the
latter).

In this paper we give a precise and explicit description of the zero-fiber of the Mumford system, which is the
fiber of $\Phig$ over the very special point $h(x) = x^{2g+1}$ in $\Hg$, for which the spectral curve $C_g :=
C_g(x^{2g+1})$ becomes a singular curve given by $y^2=x^{2g+1}$.
Our results can be summarized as follows.  
(See Theorems \ref{cor_beauville}, \ref{main} and Proposition \ref{prop:uvw-tau} for (1)-(3).)
\begin{enumerate}
\item
The level set $M_g(x^{2g+1})$ is stratified by $g+1$ smooth affine varieties, which are invariant for the flows of the
vector fields $\X_1, \dots, \X_g$; they are of dimension $k=0, 1, \dots, g$.
\item
Let $k \in \{0, 1, \dots, g\}$.  There is an isomorphism between the (unique) $k$-dimensional invariant manifold in
$M_g(x^{2g+1})$ and the complement of the `theta divisor' $\Theta_k$ in the generalized Jacobian $J_k$ of $C_k$,
which linearizes the vector fields $\X_1, \dots, \X_k$.  (The vector fields $\X_{k+1}, \dots, \X_g$ vanish.)  On
the other hand, we construct explicitly an isomorphism between $J_k$ and the additive group $\C^k$, by which
$\Theta_k$ is transformed to the zero locus of an (explicitly constructed) polynomial function $\tau_k$ on $\C^k$.
Combined, for $k=g$, this yields a rational solution to the Mumford system in terms of $\tau_g$ and its
derivatives.
\item
The entire level set $M_g(x^{2g+1})$ is isomorphic to the complement of the `completed theta divisor'
$\bar{\Theta}_g$ in the compactification $\bar{J}_g$ of $J_g$.  The vector fields $\X_1, \dots, \X_g$ are
transformed to the ones induced by the natural action of $J_g$ on $\bar{J}_g$ via this isomorphism.
\end{enumerate}

The rational solutions, obtained in (2), turn out to be exactly same as the rational solutions to the Korteweg-de
Vries (KdV) hierarchy constructed in \cite{Ablo,Adler-Moser78,AMM77,Krichever79,Nimmo,sw}.  This is not surprising,
since Mumford's original motivation for constructing the Mumford system is the fact that \emph{every} solution to
the Mumford system yields a solution to the KdV hierarchy \cite[p.\ 3.203]{Mumford-Book}. We therefore recover the
rational solutions of the KdV hierarchy by using an adapted version of the Abel-Jacobi map within the
finite-dimensional framework of the Mumford system.

\noindent
{\it Outline of the paper.}
In \S 2, we briefly review the basic facts about the Mumford system.  \S 3 is devoted to a detailed analysis of the
generalized Jacobian $J_g$ of $C_g$ and its compactification~$\bar{J}_g$.  We then apply in \S 4 the results of \S
3 to the Mumford system.  In \S 5, we give an algorithm to produce rational solutions for the Mumford system.  In
\S 6, we study the relation to the KdV hierarchy.

\noindent
{\it Acknowledgement.}  Part of this work is done while the first and third authors stay in the Universit\'e de
Poitiers.  They are grateful to the hospitality of the members there. We also wish to thank the anonymous referee
for his suggestions which allowed us to better relate our results to the extensive KdV literature.

\section{The Mumford system}\label{sec:mumford}

In this section, we recall the basic facts about the Mumford system (\cite{Mumford-Book},
\cite[Ch. VI.4]{Van1638}).  Throughout the section, $g$ is a fixed positive integer.

\subsection{Hamiltonian structure and integrability}
The phase space $\Mg$ defined in \eqref{Mg} of the Mumford system
is equipped with the Poisson structure defined by
(see \cite[Ch. VI (4.4)]{Van1638}) 
\begin{align*}
    &\{ u(x), u(z) \} = \{ v(x), v(z) \} = 0,\\
    \displaybreak[0]
    &\{ u(x), v(z) \} = \frac{u(x) - u(z)}{x-z},\\
    \displaybreak[0]
    &\{ u(x), w(z) \} = -2 \frac{v(x) - v(z)}{x-z},\\
    \displaybreak[0]
    &\{ v(x), w(z) \} = \frac{w(x) - w(z)}{x-z} - u(x) ,\\
    \displaybreak[0]
    &\{ w(x), w(z) \} = 2 \bigl(v(x) - v(z)\bigr).
\end{align*}
The natural coordinates $h_0,\dots,h_{2g}$ on $\Hg$ \eqref{Hg} 
can be regarded as polynomial functions on $\Mg$.
These functions are pairwise in involution with respect to the above 
Poisson structure\footnote{Actually, they are in involution 
with respect to a whole family of compatible Poisson
structures, see \cite[Ch. VI (4.4)]{Van1638}.},
where $h_{g}, \cdots, h_{2g}$ are the Casimirs,
and $h_{0},\cdots, h_{g-1}$ generate the Hamiltonian vector fields 
$\X_1,\cdots, \X_g$ on $\Mg$ by $\X_{i}:=\Pb{h_{g-i}}$.
Introducing $D(z) := \sum_{i=0}^{g-1} z^{i}\,\X_{g-i}$, 
these vector fields can be
simultaneously written as follows (see \cite[Th. 3.1]{Mumford-Book}):
\begin{align}\label{time-evol}
  \begin{split}
    &D(z) u(x) = 2 \, \frac{u(x) v(z) - v(x) u(z)}{x-z},\\
    &D(z) v(x) = \frac{w(x) u(z) - u(x) w(z)}{x-z} - u(x) u(z),\\
    &D(z) w(x) = 2 \Bigl(\frac{v(x) w(z) - w(x) v(z)}{x-z}+ v(x) u(z) \Bigr).
  \end{split}
\end{align}
Since $\Phig$ is submersive and since the above $g$ Hamiltonian vector fields are independent at a generic point of
$\Mg$, a simple count shows that the triplet $(\Mg,\PB,\Phig)$ is a (complex) Liouville integrable system.
%
\subsection{Algebraic integrability}
%
It was shown by Mumford that $(\Mg,\PB,\Phig)$ is actually an a.c.i. system, which means that, in addition to
Liouville integrability, the generic fiber of the momentum map $\Phig$ is an affine part of an Abelian variety
(complex algebraic torus), and that the above Hamiltonian vector fields are constant (translation invariant) on
these tori.  We sketch the proof, which Mumford attributes to Jacobi.  To a polynomial $h(x)\in\Hg$, one naturally
associates two geometrical objects:
\begin{enumerate}
  \item[$\bullet$] 
The {\it spectral curve} $C_g(h)$ is defined to be 
a completion of the affine curve in $\C^2$ given by $y^2 = h(x)$
by adding one smooth point $\infty$.
This is an integral projective (possibly singular) hyperelliptic curve
of (arithmetic) genus $g$.
\item[$\bullet$] 
The {\it level set} $M_g(h)$ is defined to be
the fiber of $\Phig$ over $h(x)$.
\end{enumerate}

\begin{theorem}[Mumford]\label{thm:mumford}
Suppose that $h(x) \in H_g$ has no multiple roots, so that $C_g(h)$ is an irreducible projective smooth
hyperelliptic curve of genus $g$.  Let $J_g(h)$ and $\Theta_g(h)$ be the Jacobian variety and the theta divisor of
$C_g(h)$.  Then there is an isomorphism $M_g(h) \cong J_g(h) \setminus \Theta_g(h)$ by which the vector fields
$\X_1, \dots, \X_{g}$ are transformed into independent translation invariant vector fields on $J_g(h)$.
\end{theorem}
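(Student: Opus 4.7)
My plan is to establish the Jacobi-Mumford correspondence between Lax matrices in $\Mg(h)$ and divisor classes on the smooth spectral curve $C_g(h)$, and then to verify linearization of the flows by a direct computation from \eqref{time-evol}.

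\textbf{Direct map.} For $\ell(x)\in\Mg(h)$, the identity $v(x)^2+u(x)w(x)=h(x)$ coming from $-\det\ell=h$ shows that if $u(x)=\prod_{i=1}^g(x-x_i)$ then each $P_i:=(x_i,v(x_i))$ lies on the affine part of $C_g(h)$. I associate to $\ell$ the class of the degree-zero divisor $\sum_{i=1}^g P_i-g\cdot\infty$, interpreted scheme-theoretically (as the zero-scheme of the ideal $(u(x),\,y-v(x))$) when $u$ has repeated roots.

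\textbf{Inverse map.} By Riemann--Roch, every class in $J_g(h)\setminus\Theta_g(h)$ admits a unique effective representative $\sum_{i=1}^g P_i$ of degree $g$ not containing $\infty$. When the $x$-coordinates are distinct, set $u(x):=\prod(x-x_i)$, let $v(x)$ be the unique polynomial of degree $<g$ with $v(x_i)=y_i$ (Lagrange interpolation), and define $w(x):=(h(x)-v(x)^2)/u(x)$. Divisibility holds because $h(x_i)=y_i^2=v(x_i)^2$, and a comparison of leading terms forces $w$ to be monic of degree $g+1$, giving a point of $\Mg(h)$. The two constructions extend consistently to non-reduced divisors and are mutually inverse, yielding an isomorphism of algebraic varieties.

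\textbf{Linearization.} Differentiating the identity $u(x_i)=0$ along $D(z)$ and using \eqref{time-evol}, I obtain
\begin{equation*}
D(z)\,x_i \;=\; \frac{2\,v(x_i)\,u(z)}{(x_i-z)\,u'(x_i)}.
\end{equation*}
Taking $\omega_j=x^{j-1}\,dx/(2y)$ for $j=1,\dots,g$ as a basis of holomorphic differentials on $C_g(h)$, the derivative of the $j$-th Abel--Jacobi coordinate $\alpha_j=\sum_i\int_\infty^{P_i}\omega_j$ along $D(z)$ becomes
\begin{equation*}
D(z)\,\alpha_j
\;=\; \sum_{i=1}^g\frac{x_i^{j-1}}{2\,v(x_i)}\,D(z)\,x_i
\;=\; u(z)\sum_{i=1}^g\frac{x_i^{j-1}}{(x_i-z)\,u'(x_i)}
\;=\; -z^{j-1},
\end{equation*}
where the final equality is the Lagrange identity $\sum_i p(x_i)/((z-x_i)u'(x_i))=p(z)/u(z)$, valid for $p$ of degree $<g$, applied to $p(x)=x^{j-1}$. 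Since the right-hand side depends only on $z$, each coefficient $\X_k$ of $D(z)$ pulls back to a constant vector field on $J_g(h)$ in the coordinates $(\alpha_1,\dots,\alpha_g)$, and the linear independence of the monomials $1,z,\dots,z^{g-1}$ gives the independence of $\X_1,\dots,\X_g$.

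\textbf{Main obstacle.} The computational core (Lagrange identity plus Dubrovin-style residue calculus) is direct. The genuinely delicate point is the scheme-theoretic extension of the bijection from the generic locus where the $x_i$ are distinct to all of $\Mg(h)$, which is what promotes a set-theoretic bijection into the desired isomorphism of algebraic varieties and guarantees that the Hamiltonian flows extend to the global translation invariant vector fields on $J_g(h)$.
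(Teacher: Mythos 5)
Your proposal is correct and follows essentially the same route as the paper: the identification of $M_g(h)$ with divisors $\sum_i(x_i,v(x_i))-g\cdot\infty$ via the roots of $u$, the Abel--Jacobi map onto $J_g(h)\setminus\Theta_g(h)$, and the linearization via $D(z)x_i=2v(x_i)u(z)/\bigl((x_i-z)u'(x_i)\bigr)$ together with the Lagrange/partial-fractions identity are exactly the paper's computation, including the final appeal to holomorphy to extend translation invariance from the generic locus. No substantive difference to report.
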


\noindent
{\it Outline of the proof.}  One first proves that there is an isomorphism between $M_g(h)$ and an open dense
subset
\begin{equation*}
\mathcal{S}:= 
  \{ \sum_{i=1}^g [P_i] \in \Sym^g(C_g(h) \setminus \{ \infty \})
     ~|~ i\neq j\Rightarrow P_i\neq\imath (P_j) \}
\end{equation*}%
of $\Sym^g(C_g(h))$,
where $\imath: C_g(h) \to C_g(h)$ is the hyperelliptic involution.
This isomorphism is given by
\begin{equation}\label{eq:mumford_map_def}
  \ell(x)=\begin{pmatrix}
    v(x) & w(x) \\
    u(x) & -v(x)
  \end{pmatrix} 
  \mapsto \sum_{\hbox{roots $x_i$ of $u(x)$}} [ (x_i,v(x_i)) ]
\end{equation}%
when $u(x)$ has no multiple roots, which naturally extends to the whole of $M_g(h)$ by the interpolation formula.
The next step is to show that the Abel-Jacobi map induces an isomorphism between $\mathcal{S}$ and $J_g(h)
\setminus \Theta_g(h)$. Combined with the first step, this yields the isomorphism between $M_g(h)$ and
$J_g(h)\setminus \Theta_g(h)$.

As for the translation invariance of the vector fields $\X_1, \dots, \X_{g}$ on $J_g(h)$, it suffices to prove that
they are translation invariant in the neighborhood of a generic point, because they are holomorphic on $M_g(h)$. 
We use the above isomorphism to write these Hamiltonian vector fields down in terms of the variables $x_i$, which
yield local coordinates in the neighborhood of a generic point of $\mathcal S$.  We calculate $D(z)u(x_i)$ in two
different ways:
\begin{align*}
  & D(z) u(x) \big|_{x=x_i} = 2 \, \frac{v(x_i) u(z)}{z-x_i} 
  = 2 y_i \prod_{k \neq i} (z-x_k),
\\  
  &D(z) u(x) \big|_{x=x_i} = -\prod_{k \neq i} (x_i-x_k) D(z) x_i.
\end{align*}
Thus 
$$
  \frac{D(z) x_i}{y_i} = -2 \prod_{k \neq i} \frac{z-x_k}{x_i-x_k}.
$$
By using the interpolation formula, we obtain
$$
  \sum_{i=1}^g x_i^{j-1}\frac{D(z) x_i}{y_i}
  =-2\sum_{i=1}^g x_i^{j-1}\prod_{k\neq i}\frac{z-x_k}{x_i-x_k}=-2 z^{j-1}
$$
for $j=1,\cdots,g.$ 
It follows that in terms of the local coordinates $x_i$, 
the vector fields $\X_i$ are
expressed by
\begin{align}\label{vf-xy}
  \begin{pmatrix}
    1 & 1 & \cdots & 1 \\ 
    x_1 & x_2 & \cdots & x_g \\
    \vdots & & & \vdots \\
    x_1^{g-1} & x_2^{g-1} & \cdots & x_g^{g-1} 
  \end{pmatrix}
  \begin{pmatrix}
    \frac{\X_g x_1}{y_1} & 
        \frac{\X_{g-1} x_1}{y_1} & \cdots & \frac{\X_{1} x_1}{y_1} \\
    \frac{\X_g x_2}{y_2} & \cdots & & \frac{\X_1 x_2}{y_2} \\
    \vdots & & & \vdots \\
    \frac{\X_g x_g}{y_g} & \cdots & & \frac{\X_1 x_g}{y_g} 
  \end{pmatrix}
  =
  -2 \, \mathbb{I}_g.
\end{align}
The $g$ differential forms $\{ \sum_{i=1}^g x_i^j\diff x_i/y_i \}_{j=0,\dots,g-1}$ on $\mathcal{S}$ are seen to be
the dual basis to $\{ \X_i \}_{i=1, \dots, g}$ (up to a scalar) by \eqref{vf-xy}.  Since $\{ \sum_{i=1}^g
x_i^j\diff x_i/y_i \}_{j=0,\dots,g-1}$ constitute under the Abel-Jacobi map a basis for the space of holomorphic
one-forms on $J_g(h)$, it follows that $\X_1, \dots, \X_g$ extends to holomorphic (hence translation invariant)
vector fields on $J_g(h)$.
\qed

\subsection{Singular fiber}
We consider what happens in Theorem \ref{thm:mumford} 
when $C_g(h)$ is singular.
For a coherent sheaf $\mathcal{F}$ on $C_g(h)$
and $k \in \Z$,
we write $\mathcal{F}(k)$ for 
$\mathcal{F} \otimes \cO_{C_g(h)}(k [\infty])$.
For any $h(x) \in \Hg$, we define $J_g(h)$ and $\bar{J}_g(h)$ 
respectively
to be the generalized Jacobian variety of $C_g(h)$
(which parametrizes invertible sheaves on $C_g(h)$ of degree zero)
and its compactification
(which parametrizes torsion free $\cO_{C_g(h)}$-modules $\mathcal{L}$ 
of rank one such that 
$h^0(C_g(h), \mathcal{L})-h^1(C_g(h), \mathcal{L})=1-g$)
(see \cite{dsouza}).
We have $J_g(h) = \bar{J}_g(h)$ if $h(x)$ has no multiple root.
We have a natural inclusion $J_g(h) \subset \bar{J}_g(h)$
(see \cite{rego}).
We also define
\begin{align}
 \Theta_g(h) &:= \{ L \in J_g(h) ~|~h^0(C_g(h), L(g-1)) \not= 0 \},\label{def:theta_1}\\
 \bar{\Theta}_g(h) &:= \{ \mathcal{L} \in \bar{J}_g(h)~|~ h^0(C_g(h), \mathcal{L}(g-1)) \not= 0 \}.\label{def:theta_2}
\end{align}
Note that we have $h^0(C_g(h), \mathcal{L}(g-1)) = h^1(C_g(h), \mathcal{L}(g-1))$ for any $\mathcal{L} \in
\bar{J}_g(h)$.\footnote{When $\mathcal{L} \in J_g(h)$, this is a consequence of the Riemann-Roch theorem (cf.\
\cite{Serre}).  For a general $\mathcal{L} \in \bar{J}_g(h)$, the proof can be reduced to the previous case,
because one can find a partial normalization $f: C' \to C_g(h)$ and an invertible sheaf $L$ on $C'$ of degree zero
such that $f_*(L)=\mathcal{L}$ (cf.\ \cite[p.\ 101]{Beauville99} ).}  We set
\[ M_g(h)_{reg} := 
\{ l(x) \in M_g(h) 
~|~ l(a) ~\text{is regular for all}~a \in \mathbb{P}^1\}.
\]
(Recall that $A \in M_2(\C)$ is {\it regular} iff all eigenspaces of $A$ are one-dimensional.  Note that the matrix
$l(\infty) = \begin{pmatrix} 0 & 1 \\ 0 & 0\end{pmatrix}$ of leading coefficients is regular.)  When $h(x)$ has no
multiple root, we have $M_g(h) = M_g(h)_{reg}.$ Here we state a special case of a result of Beauville
\cite{Beauville90}.

\begin{theorem}[Beauville]\label{thm:beauville}
For any $h\in H_g$, there exist isomorphisms
  \[
    M_g(h)\cong\bar{J}_g(h)\setminus \bar{\Theta}_g(h)\qquad\text{and}\qquad M_g(h)_{reg}\cong J_g(h)\setminus\Theta_g(h),
  \]
   where the latter is a restriction of the former.
\end{theorem}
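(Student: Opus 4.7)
The plan is to follow the approach of Beauville, constructing explicit mutually inverse maps between $M_g(h)$ and $\bar{J}_g(h)\setminus\bar{\Theta}_g(h)$ via the eigensheaf of $\ell(x)$ on the spectral curve $C_g(h)$.

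For the forward direction, given $\ell(x)\in M_g(h)$, I would form the cokernel sheaf $\mathcal{L}(\ell)$ of the matrix morphism $yI-\ell(x)$ acting on an appropriately twisted sum of line bundles on $C_g(h)$; the twists are determined by the $x$-degrees of the entries of $\ell$ and by the asymmetric behavior at the smooth point $\infty$ (a branch point of the hyperelliptic double cover $\pi:C_g(h)\to\P^1$). I would then verify, by a local calculation, that $yI-\ell$ has rank exactly one at every point of $C_g(h)$, so that $\mathcal{L}(\ell)$ is torsion-free of rank one; the resolution gives $\chi(\mathcal{L}(\ell))=1-g$, placing $\mathcal{L}(\ell)\in\bar{J}_g(h)$, and a direct cohomological computation from the resolution shows $h^0(C_g(h),\mathcal{L}(\ell)(g-1))=0$, so $\mathcal{L}(\ell)\notin\bar{\Theta}_g(h)$.

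For the inverse direction, given $\mathcal{L}\in\bar{J}_g(h)\setminus\bar{\Theta}_g(h)$, the simultaneous vanishing $h^0(\mathcal{L}(g-1))=h^1(\mathcal{L}(g-1))=0$, together with a degree count, forces $\pi_*\mathcal{L}(g-1)\cong\cO_{\P^1}(-1)^{\oplus 2}$ (the unique rank-two bundle on $\P^1$ with these invariants). Multiplication by the section $y$ of $\pi^*\cO_{\P^1}(g+1)$ then induces an $\cO_{\P^1}$-linear morphism $\pi_*\mathcal{L}(g-1)\to\pi_*\mathcal{L}(g-1)\otimes\cO_{\P^1}(g+1)$; in a trivialization normalized by the prescribed leading behavior at $\infty$, this map is encoded by a matrix polynomial of exactly the shape appearing in \eqref{Mg}, and the identity $\det(yI-\ell(x))=y^2-h(x)$ follows from Cayley--Hamilton applied on the spectral curve. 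The two constructions are mutually inverse: $\pi_*(\mathcal{L}(\ell)(g-1))$ can be computed directly from the resolution and reproduces $\ell$, while the cokernel of $yI-\ell$ built from $\mathcal{L}$ is identified with $\mathcal{L}$ via the five-lemma.

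The main technical obstacle is the singular case: when $C_g(h)$ is singular and $\mathcal{L}$ is only torsion-free, the naive pushforward and multiplication-by-$y$ arguments require care at the singular points. Following \cite{Beauville90} and the partial-normalization technique of \cite{dsouza} (as hinted at in the footnote), one reduces to the locally free situation on a partial normalization $f:C'\to C_g(h)$ on which $\mathcal{L}$ becomes invertible. Finally, for the restriction to $M_g(h)_{reg}$, the stalk of $\mathcal{L}(\ell)$ at a point $P=(a,b)\in C_g(h)$ is the cokernel of $bI-\ell(a)$ on $\cO_{C_g(h),P}^{\oplus 2}$; this cokernel is cyclic (hence $\mathcal{L}(\ell)$ is invertible at $P$) precisely when the $b$-eigenspace of $\ell(a)$ is one-dimensional, i.e.\ when $\ell(a)$ is a regular matrix. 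This shows that the forward map restricts to a bijection between $M_g(h)_{reg}$ and the locus of invertible sheaves in $\bar{J}_g(h)\setminus\bar{\Theta}_g(h)$, which is exactly $J_g(h)\setminus\Theta_g(h)$.
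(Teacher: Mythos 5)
Your outline follows the same route as the paper (and as Beauville's original argument): the key equivalence $\mathcal{L}\notin\bar{\Theta}_g(h)\Leftrightarrow f_*(\mathcal{L}(g-1))\cong\cO_{\P^1}(-1)^{\oplus 2}$, the matrix of multiplication by $y$ in a normalized trivialization, Cayley--Hamilton to get $-\det \ell(x)=h(x)$, and the identification of the regular locus with the locus of invertible sheaves. The additional material you supply, namely the explicit inverse via the cokernel of $yI-\ell(x)$, is the standard spectral-sheaf construction and is consistent with the paper, which instead simply cites Beauville for bijectivity and for the statement about $M_g(h)_{reg}$.

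There is, however, one step as stated that would fail, and it sits exactly at the delicate point of the theorem. You claim that for every $\ell(x)\in M_g(h)$ the matrix $yI-\ell(x)$ has rank exactly one at every point of $C_g(h)$, and you deduce torsion-freeness of the cokernel from this. Since $\ell$ is traceless, the rank of $bI-\ell(a)$ at a point $(a,b)$ of the curve drops to zero precisely when $b=0$ and $\ell(a)=0$, i.e.\ precisely when $\ell\notin M_g(h)_{reg}$ (and then $(a,0)$ is automatically a singular point of $C_g(h)$, because $(x-a)^2$ divides $h$). So the rank-one claim holds only on $M_g(h)_{reg}$; worse, if it held on all of $M_g(h)$, the cokernel would have constant fibre dimension one and hence be invertible, so your forward map would land in $J_g(h)\setminus\Theta_g(h)$ and could never be surjective onto $\bar{J}_g(h)\setminus\bar{\Theta}_g(h)$ when $h$ has multiple roots. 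The genuine verification needed is that at the rank-zero points, where the fibre of the cokernel is two-dimensional, the cokernel is nevertheless torsion-free of rank one; this requires a local computation at the singular point (or the partial-normalization reduction you invoke, which as written you apply only to the inverse direction). Your final paragraph, equating regularity of $\ell(a)$ with cyclicity of the stalk, implicitly concedes that the rank can drop, so the earlier claim should be corrected rather than relied upon; with that repair your argument coincides with the paper's proof.
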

\noindent
{\it Outline of the proof.}  Let $f: C_g(h) \to \mathbb{P}^1$ be the map given by $(x,y) \mapsto x$.  We take
$\mathcal{L} \in \bar{J}_g(h) \setminus \bar{\Theta}_g(h)$.  We see that the condition $\mathcal{L}
\not\in\bar{\Theta}_g(h)$ implies that there exists an isomorphism $E:=f_*(\mathcal{L}(g-1)) \cong
\mathcal{O}_{\mathbb{P}^1}(-1)^{\oplus 2}$ of $\mathcal{O}_{\mathbb{P}^1}$-modules (and vice versa), which is
unique up to the conjugation by an element of $GL_2(\C)$.  Once we fix this isomorphism, the map $E \to E(g+1)$
defined by the multiplication by $y \in \mathcal{O}_{C_g(h)}$ is represented by a matrix $\tilde{l}(x) \in
M_2(\C[x])$ such that all the entries of $\tilde{l}(x)$ are of degree $\leqslant g+1$.  We also have $-\det
\tilde{l}(x) = h(x)$ by the Cayley-Hamilton formula.  In the $GL_2(\C)$-conjugate class of $\tilde{l}(x)$, there
exists a unique $l(x)$ which belongs to $M_g(h)$ (cf. \cite[(1.5)]{Beauville90}).  It follows that the
correspondence $\mathcal{L} \mapsto l(x)$ defines a bijection $\bar{J}_g(h) \setminus \bar{\Theta}_g(h) \cong
M_g(h)$.  In order to see this is an isomorphism, we simply notice that the same argument works after any base
change.  It is shown in \cite[(1.11-13)]{Beauville90} that the restriction of this isomorphism defines
$M_g(h)_{reg} \cong J_g(h) \setminus \Theta_g(h)$.
\qed

\begin{remark}\label{MandB}
We briefly explain that the two isomorphisms constructed by Mumford and Beauville coincide when $h(x) \in H_g$ has
no multiple root.  We take $l(x) =
\begin{pmatrix} 
  v(x) & w(x) \\ u(x) & -v(x) 
\end{pmatrix} 
\in M_g(h)$.
Mumford associates to $l(x)$ the invertible sheaf $L= \mathcal{O}_{C_g(h)}(D - g [\infty])$ where $D = \sum_{i=1}^g
[(x_i, v(x_i))]$ with $u(x)=\prod_{i=1}^g(x-x_i)$.  We set $E:=f_*(L(g-1)) =
f_*(\mathcal{O}_{C_g(h)}(D-[\infty]))$.  Then we can choose an isomorphism $E(1) \cong
\mathcal{O}_{\mathbb{P}^1}^{\oplus 2}$ in such a way that on the global section $u(x)$ and $y-v(x)$ are mapped to
the standard basis of $\mathcal{O}_{\mathbb{P}^1}^{\oplus 2}$.  (Note that $\{ u(x), y-v(x) \}$ is a basis of
$H^0(C_g(h), L(g))$.)  Then the multiplication by $y$ is represented by $l(x)$, since $(u(x), ~ y-v(x)) y = (u(x),
~y-v(x)) l(x)$ follows from the relation $y^2=h(x)=u(x)w(x)+v(x)^2$.
\end{remark}
Beauville also showed that the Hamiltonian vector fields are transformed by this isomorphism to the vector fields
generated by the group action of $J_g(h)$, but the proof works only when $C_g(h)$ is non-singular.  It should be
possible to modify his argument to deal with singular cases, but we avoid it.  Instead, we limit ourselves to
consider a very singular rational curve obtained by taking $h(x)=x^{2g+1}$, so that the curve is given by
$y^2=x^{2g+1}$.  For this curve we will make the above isomorphism explicit, which entails in particular an
explicit description of $M_g(h)_{reg}$ as a subset of $M_g(h),$ a description of the Jacobian variety as the
additive group $\C^g$, and a description of the theta divisor as a subvariety of $\C^g$.  The latter two
descriptions will be given in the following section.  We will then discuss the Hamiltonian vector fields in \S 4.

\section{Generalized Jacobian and its compactification}

For a positive integer $g$,
we define $C_g$ to be the (complete, singular) hyperelliptic curve
defined by the equation $y^2 = x^{2g+1}$.
In this section, we study in detail the structure of 
the generalized Jacobian of $C_g$ and its compactification.

\subsection{Generalized Jacobian}

Let $J_g$ be 
the generalized Jacobian variety of $C_g$, 
which parametrizes isomorphism classes of invertible sheaves 
on $C_g$ of degree zero (cf. \cite{Serre}).

The normalization of $C_g$ is given by
$\pi_g: \mathbb{P}^1 \to C_g;~ \pi_g(t) = (t^2, t^{2g+1})$.
Let $O$ and $\infty$ be the points on $\mathbb{P}^1$
whose coordinates are $t=0$ and $\infty$ respectively.
The images of $O$ and $\infty$ by $\pi_g$ are, 
by abuse of notation, 
written by the same letter $O$ and $\infty$.
Note that $O$ is the unique singular point on $C_g$.
We write $R_g$ for the local ring 
$\cO_{C_g, O}$ of $C_g$ at $O$,
which we regard as a subring of 
$S = \cO_{\mathbb{P}^1, O} = \C[t]_{(t)}$ (via $\pi_g$).
The completions of $S$ and $R_g$ are 
identified with $\C[[t]]$ and $\C[[t^2, t^{2g+1}]]$ respectively.
The following isomorphisms play an important role throughout this paper
(see, for example, \cite{Serre}):
\begin{equation}\label{jacobian}
 \C^g
 \cong \C[[t]]^*/\C[[t^2, t^{2g+1}]]^*
 \cong S^*/R_g^*
 \cong \Div^0(C_g \setminus \{ O\})/ \divi(R_g^*)
 \cong J_g.
\end{equation}
Here the first map is given by
$$ \vec{a} = (a_1, \dots, a_g) \mapsto 
f(t; \vec{a}) := \exp(\sum_{i=1}^g a_i t^{2i-1}) 
\mod \C[[t^2, t^{2g+1}]]^*.
$$
The second map is induced by 
the "inclusion to their completion"
$S \subset \C[[t]]$ and $R_g \subset \C[[t^2, t^{2g+1}]]$.
The third map associates to the class of $f \in S^*$
its divisor class $\divi(f)$.
The fourth map is defined by $D \mapsto \cO(-D)$,
where for $D \in \Div(C_g \setminus \{ O \})$
we write the corresponding invertible sheaf by $\cO(D)$.
We often identify all the five groups 
appearing in \eqref{jacobian} altogether.

It is convenient to introduce the following notations:

\begin{definition}\label{def:chi}
\begin{enumerate}
\item
We define polynomials $\chi_n \in \C[a_1, a_2, \dots]$
for $n \in \Z_{\geqslant 0}$ by
\begin{align}\label{chi}
   \exp(\sum_{i=1}^{\infty} a_i t^{2i-1}) 
   = \sum_{n=0}^{\infty} \chi_n t^n
\qquad \text{in} ~\C[[t]].
\end{align}
For example,
we have $\chi_0=1, \chi_1=a_1, \chi_2=\frac{a_1^2}{2}, 
\chi_3=\frac{a_1^3}{6} + a_2.$
One sees that $\chi_n$ is a polynomial in
the variables $a_1, \dots, a_{[\frac{n+1}{2}]}$.
We set $\chi_{n}=0$ for $n \in \Z_{<0}$.
\item
We set $f_{g}(t; \vec{a}) := \sum_{n=0}^{2g-1} \chi_n(\vec{a}) t^n$
for $\vec{a} \in \C^g$.
Since $f_{g}(t; \vec{a}) \equiv f(t; {\vec{a}})$ in
$\C[[t]]^*/\C[[t^2, t^{2g+1}]]^*$,
the invertible sheaf $L:=\cO(-\divi(f_{g}(t; \vec{a}))) \in J_g$ 
corresponds to $\vec{a} \in \C^g$ by \eqref{jacobian}.
\end{enumerate}
\end{definition}

In order to study the structure of $J_g$,
we need to introduce some definitions.

\begin{definition}
For a natural number $k$, we define the {Abel-Jacobi map}
\[ \aj_{g, k}: \Sym^{k} (C_g \setminus \{ O \}) \to J_g  \]
by $\aj_{g, k}(D) := \cO(D - k [\infty])$.
\end{definition}

\begin{definition}\label{def:X}
We define a $(g \times 2g)$-matrix
\[ X_{2g} := 
\begin{pmatrix}
\chi_1 & \chi_0 & 0 & 0 & 0 & 0 &  \cdots & 0 \\
\chi_3 & \chi_2 & \chi_1 & \chi_0 & 0 & 0 &  \cdots & 0 \\
\chi_5 & \chi_4 & \chi_3 & \chi_2 & \chi_1 & \chi_0  & \cdots & 0 \\
\vdots &        &        &        &        &        & &\vdots & \\
\chi_{2g-1} & \chi_{2g-2} & \chi_{2g-3} & \chi_{2g-4} & 
\chi_{2g-5} & \cdots 
& \chi_1 & \chi_0
\end{pmatrix}
\]
with entries in $\C[a_1, \dots, a_g]$.
For $0 \leqslant k \leqslant 2g$,
let $X_{k}$
be the $(g \times k)$-submatrix of $X_{2g}$
consisting of the left $k$ columns of $X_{2g}$.
\end{definition}

\begin{lemma}\label{keylemma}
Let $k \in \Z$ and $\vec{a}=\C^g$.
Recall that $L=\cO(-\divi(f_g(t; \vec{a})))$
is the corresponding invertible sheaf.
\begin{enumerate}
\item
Assume that $0 \leqslant k \leqslant 2g-1$.
Then the following conditions are equivalent:
\begin{enumerate}
\item $h^0(C_g, L(k)) \not= 0$.
\item 
There exists $h(t) \in \C[t]\setminus\{0\}$
such that $\deg h(t) \leqslant k$ and $f_g(t; \vec{a})h(t) \in R_g$.
\item
There exists $\vec{b}=(b_i)_{i=0}^{k} \in \C^{k+1}$
such that $X_{k+1} \vec{b} = 0$ and $\vec{b} \not= 0$.
\end{enumerate}
\item
Assume that $0 \leqslant k \leqslant 2g-1$.
Then the following conditions are equivalent:
\begin{enumerate}
\item $L$ is in the image of $\aj_{g, k}$.
\item 
There exists $h(t) \in \C[t]$ 
such that $\deg h(t) \leqslant k$ and $f_{g}(t; \vec{a})h(t) \in R_g^*$.
\item
There exists $\vec{b}=(b_i)_{i=0}^{k} \in \C^{k+1}$
such that $X_{k+1} \vec{b} = 0$ and $b_0 \not= 0$.
\end{enumerate}
\end{enumerate}
\end{lemma}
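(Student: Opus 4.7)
The strategy is to realize $L$ concretely as a subsheaf of the constant sheaf of rational functions on $C_g$, then translate the cohomological condition into linear algebra. The main work is at the singular point $O$: using the isomorphism $J_g \cong S^*/R_g^*$, one may realize $L = \cO(-\divi(f_g))$ inside the constant sheaf $K$ with the property that $L_O = R_g$ (as sub-$R_g$-module of $S\subset K$) and $L_P = f_g \cdot \cO_{C_g,P}$ at every smooth point $P$. A global section $s \in H^0(C_g, L(k))$ then corresponds, via $h:=s/f_g$, to a rational function $h$ regular on the smooth affine locus, with pole of order $\leqslant k$ at $\infty$, and such that $s = f_g h \in R_g$. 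Since $f_g(0) = 1$, the last condition in particular forces $h$ to be regular at $t=0$; combined with the other two conditions (pulled back through the normalization $\pi_g$), $h$ must be a polynomial in $t$ of degree $\leqslant k$ satisfying $f_g h \in R_g$. This yields a bijection between the non-zero $h$'s described in (b) and the non-zero elements of $H^0(C_g, L(k))$, establishing (a) $\Leftrightarrow$ (b) of Part (1).

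For (b) $\Leftrightarrow$ (c), I would expand $h(t) = \sum_{i=0}^k b_i t^i$ and $f_g(t; \vec{a}) = \sum_n \chi_n t^n$, so that the coefficient of $t^j$ in $f_g h$ is $\sum_i \chi_{j-i} b_i$. The monomials of $\C[[t]]$ not in $\hat{R}_g = \C[[t^2, t^{2g+1}]]$ are precisely $t, t^3, \ldots, t^{2g-1}$ (any higher odd power $t^{2g+1+2m}$ equals $t^{2g+1}\cdot t^{2m}$ and hence lies in $\hat{R}_g$). The condition $f_g h \in R_g$ thus amounts to the vanishing of exactly $g$ linear combinations of the $b_i$, which by the definition of the matrix is $X_{k+1}\vec{b} = 0$. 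For Part (2), the same correspondence $s = f_g h$ applies; any non-zero section of $L(k)$ has zero divisor of degree $\deg L(k) = k$, so $L$ lies in the image of $\aj_{g,k}$ iff this divisor has no support at $O$, i.e., iff $s \in R_g^*$, equivalently $s(O) = f_g(0)h(0) = b_0 \neq 0$. A direct check that the zero contributions from $\mathbb{A}^1 \setminus \{0\}$ (giving $\deg h$) and from $\infty$ (giving $k - \deg h$) sum to $k$ confirms the divisor has the required form, yielding (a) $\Leftrightarrow$ (b) $\Leftrightarrow$ (c) of Part (2).

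The main obstacle is pinning down the correct description of $L$ near the singular point $O$: because $O$ is not smooth, the Weil divisor of $f_g$ on $C_g\setminus\{O\}$ alone does not determine an invertible sheaf, and one must track the glueing data encoded in $[f_g] \in S^*/R_g^*$ to identify the stalk $L_O$ correctly. Once this convention is set so that sections of $L(k)$ biject with polynomials $h$ satisfying $f_g h \in R_g$, the remaining argument is routine bookkeeping on coefficients and divisor degrees.
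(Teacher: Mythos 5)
Your proof is correct and follows essentially the same route as the paper: both reduce condition (a) to the existence of a polynomial $h$ of degree at most $k$ with $f_g(t;\vec{a})h(t)\in R_g$ (resp.\ $R_g^*$), and then translate this into the vanishing of the coefficients of $t,t^3,\dots,t^{2g-1}$, i.e.\ $X_{k+1}\vec{b}=0$ (with $b_0\neq 0$ for part (2)). The only cosmetic difference is in part (2), where you detect membership in the image of $\aj_{g,k}$ via a section of $L(k)$ whose zero divisor avoids $O$, while the paper computes directly that $\aj_{g,k}(E)$ is represented by $h(t)^{-1}$ in $S^*/R_g^*$; both amount to the same condition $f_g h\in R_g^*$.
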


\begin{proof}
First we prove (1):
\begin{align*}
h^0(C_g, L(k)) \not= 0
\Leftrightarrow&
\exists r \in R_g \setminus \{ 0 \}, ~
\divi(r) - \divi(f_{g}(t; \vec{a})) + k \cdot \infty \geqslant 0
\\
\Leftrightarrow&
\exists h (= \frac{r}{f_{g}(t; \vec{a})}) ~\in \C[t] \setminus \{ 0 \},~
\text{s.t.}
\deg h(t) \leqslant k, ~ f_{g}(t; \vec{a}) h(t) \in R_g
\\
\Leftrightarrow&
\exists 
h(t) = \sum_{j=0}^{k} b_j t^j \not= 0
~\text{s.t.}~
f_{g}(t; \vec{a}) h(t) = \sum_n (\sum_j b_j \chi_{n-j}) t^n \in R_g
\\
\Leftrightarrow&
\exists \vec{b} = (b_j) \in \C^{k+1} \setminus \{ 0 \}
~\text{s.t.}~
\sum_j b_j \chi_{n-j} = 0 ~(n=1, 3, \dots, 2g-1)
\\
\Leftrightarrow&
\exists \vec{b} = (b_j) \in \C^{k+1} \setminus \{ 0 \}
~\text{s.t.}~
X_{k+1} \vec{b} = 0.
\end{align*}
Next we prove (2).
If $E = \sum_{i=1}^{k} [\pi_g(\alpha_i)] 
\in \Sym^{k}(C_g \setminus \{ O \})$
with $\alpha_i \in \mathbb{P}^1 \setminus \{ O \}$,
then $\aj_{g, k}(E)$
is represented by 
$h(t)^{-1} \in S^*$ where 
$h(t) = \prod_{i=1}^{k} (1 - \frac{t}{\alpha_i})$.
(Here the factors with $\alpha_i= \infty$ are regarded as $1$.
Hence $h(t)$ is a polynomial of degree $\leqslant k$.)
Since $\vec{a}$ is represented by
$f_{g}(t; \vec{a}) \in S^*$,
we have 
$\vec{a} = \aj_{g, k}(E) \Leftrightarrow 
f_{g}(t; \vec{a}) h(t) \in R_g^*$.
This proves (a) $\Leftrightarrow$ (b).
The equivalence of (b) and (c) 
is seen in the same way as (1).
\end{proof}

The above lemma justifies the following definition.

\begin{definition}
We define the {\it theta divisor} $\Theta_g$ to be the zero locus of the polynomial $\det(X_{g}) \in \C[a_1, \dots,
a_g]$ in $\C^g$.  This is a divisor on $\C^g$, but we identify it with a divisor on $J_g$ via the isomorphism
\eqref{jacobian}, which is the same as $\Theta_g(x^{2g+1})$, defined in (\ref{def:theta_1}).
\end{definition}

\begin{corollary}\label{keycor}
\begin{enumerate}
\item 
For $L \in J_g$, the following conditions are equivalent:
\[ (a) ~L \in \Theta_g,
\qquad
  (b) ~h^0(C_g, L(g-1) ) \not= 0,
\qquad
  (c) ~\det (X_{g}) = 0.
\]
\item
We have $\Image(\aj_{g, g-1}) \subset \Theta_g$.
However, 
$\Image(\aj_{g, g-1}) \not= \Theta_g$
if $g \geqslant 3$.
\item
For any $L \in J_g$ and $k \geqslant g$, 
the equivalent conditions
in Lemma \ref{keylemma} (1) hold.
However,
$\Image(\aj_{g, g}) \not= J_g$ if $g \geqslant 2$.
\item
The image of $\aj_{g, g+1}$ contains $J_g \setminus \Theta_g$.
\end{enumerate}
\end{corollary}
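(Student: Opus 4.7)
The plan is to treat the four parts in turn using Lemma~\ref{keylemma}. Part (1) is immediate from Lemma~\ref{keylemma}(1) with $k=g-1$: since $X_g$ is square of size $g$, its kernel is nontrivial precisely when $\det X_g=0$, which is the defining equation of $\Theta_g$. For part (2), the inclusion $\Image(\aj_{g,g-1})\subset\Theta_g$ follows from Lemma~\ref{keylemma}(2), since any $\vec b\ne 0$ satisfying $X_g\vec b=0$ forces $\det X_g=0$. For strictness when $g\ge 3$, and for the analogous non-surjectivity of $\aj_{g,g}$ when $g\ge 2$, I would take $\vec a=(0,\dots,0,a_g)$ with $a_g\ne 0$: a direct computation shows $\chi_0=1$, $\chi_{2g-1}=a_g$, and $\chi_n=0$ for $0<n<2g-1$, so $X_g$ (resp.\ $X_{g+1}$) acquires zero rows (yielding $L\in\Theta_g$), while its last row $(a_g,0,\dots,0)$ forces $b_0=0$ in every kernel element. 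For the first assertion of part (3), the equivalent conditions hold whenever $k\ge g$ because $X_{k+1}$ is $g\times(k+1)$ with more columns than rows, hence automatically has a nontrivial kernel.

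Part (4) is the main work. Let $L\notin\Theta_g$, so $\det X_g\ne 0$ and the columns $v_0,v_1,\dots,v_{g-1}$ of $X_g$ form a basis of $\C^g$. Let $v_g,v_{g+1}$ denote the two extra columns of $X_{g+2}$, and expand $v_g=\sum_{j=0}^{g-1}\alpha_j v_j$, $v_{g+1}=\sum_{j=0}^{g-1}\beta_j v_j$. Seeking $\vec b\in\C^{g+2}$ with $X_{g+2}\vec b=0$ and $b_0=1$, the $v_0$-component of the resulting relation becomes $b_g\alpha_0+b_{g+1}\beta_0=-1$, while the remaining components simply determine $b_1,\dots,b_{g-1}$ in terms of $b_g,b_{g+1}$. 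Hence such a $\vec b$ exists if and only if $(\alpha_0,\beta_0)\ne(0,0)$, equivalently, not both of $v_g$ and $v_{g+1}$ lie in $W:=\mathrm{span}(v_1,\dots,v_{g-1})$.

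To prove the latter, I would identify $\C^g$ with $R:=\C[\xi]/(\xi^g)$ via $e_i\leftrightarrow\xi^{i-1}$. The shift $S\colon(x_1,\dots,x_g)\mapsto(0,x_1,\dots,x_{g-1})$ corresponds to multiplication by $\xi$, and a direct check gives $v_{2k}\leftrightarrow\xi^k p_0(\xi)$ and $v_{2k+1}\leftrightarrow\xi^k p_1(\xi)$ in $R$, where $p_0(\xi)=\sum_{i=1}^g\chi_{2i-1}\xi^{i-1}$ and $p_1(\xi)=\sum_{i=1}^g\chi_{2i-2}\xi^{i-1}$; the normalization $\chi_0=1$ forces $p_1(0)=1$, so $p_1$ is a unit in $R$. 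Now suppose for contradiction that $v_g,v_{g+1}\in W$. Using the identity $S(v_j)=v_{j+2}$, the inclusions $v_3,\dots,v_{g-1}\in W$ (trivial) together with $v_g,v_{g+1}\in W$ (by assumption) yield $S(W)\subset W$. Iterating $S$ on $v_1\in W$ then gives $v_1,v_3,v_5,\dots,v_{2g-1}\in W$; under the identification these correspond to $p_1,\xi p_1,\dots,\xi^{g-1}p_1$, which form a basis of $R$ because $p_1$ is a unit. Hence $W=\C^g$, contradicting $\dim W\le g-1$.

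The main obstacle is this last structural step---ruling out that both $v_g$ and $v_{g+1}$ fall back into $\mathrm{span}(v_1,\dots,v_{g-1})$. The argument crucially uses the normalization $\chi_0=1$: it makes $p_1$ automatically a unit, and this is what allows the iterated shifts of $v_1$ to span all of $R\cong\C^g$.
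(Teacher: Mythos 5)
Your proof is correct, and parts (1)--(3) follow the paper's route: the same specializations of Lemma \ref{keylemma}, with your family $\vec a=(0,\dots,0,a_g)$, $a_g\ne 0$, being precisely the paper's counterexamples $L=\cO(-\divi(1-t^{2g-1}))$ (given for $g=3$ in (2) and $g=2$ in (3)) written in the coordinates of \eqref{jacobian}; your check that the last row of $X_g$, resp.\ $X_{g+1}$, is $(a_g,0,\dots,0)$ and hence forces $b_0=0$ makes the paper's phrase about the ``effect of elements of $R_g\setminus R_g^*$'' explicit, and for the first claim of (3) you replace Riemann--Roch by the equally valid remark that $X_{k+1}$ has more columns than rows. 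In part (4) you genuinely diverge. The paper stays with the kernel of $X_{g+2}$: if both independent solutions had $b_0=0$, the first row $a_1b_0+b_1=0$ forces $b_1=0$, so the lower-right $(g-1)\times g$ block of $X_{g+2}$ (which coincides with $X_g$ minus its bottom row) would have a two-dimensional kernel, i.e.\ rank at most $g-2$, contradicting $\det X_g\ne 0$. You instead phrase the condition in terms of column spaces (a kernel vector with $b_0\ne 0$ exists iff not both extra columns $v_g,v_{g+1}$ lie in $W=\mathrm{span}(v_1,\dots,v_{g-1})$) and rule out the bad case via the shift structure: $S(v_j)=v_{j+2}$ makes $W$ shift-invariant, so $W$ contains $v_1,Sv_1,\dots,S^{g-1}v_1$, which under $\C^g\cong\C[\xi]/(\xi^g)$ are $p_1,\xi p_1,\dots,\xi^{g-1}p_1$ with $p_1$ a unit because $\chi_0=1$, hence a basis, contradicting $\dim W\le g-1$. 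Both arguments rest on $\det X_g\ne0$ together with the normalization $\chi_0=1$ (in the paper it enters through $b_0=0\Rightarrow b_1=0$, in yours through $p_1$ being a unit); the paper's rank count is shorter, while yours exposes the underlying shift-module structure of the columns, the same structure behind Lemma \ref{cpt} and the Grassmannian remark in the last section. A shared, harmless caveat: for $g=1$ the matrix $X_{g+2}$ and the case $k=g+1$ of Lemma \ref{keylemma} fall outside the stated range $k\leqslant 2g-1$; there the claim follows already from the equality $\Image(\aj_{1,1})=J_1$ noted at the end of the paper's proof.
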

\begin{proof}
(1)
When $k=g-1$, the condition 
of Lemma \ref{keylemma} (1-c) is rephrased as (c),
which proves $(b) \Leftrightarrow (c)$.
The equivalence of (a) and (c) is the definition of $\Theta_g$.

(2) The first statement
follows from a trivial fact $R_g^* \subset R_g$.
The second statement is an effect of 
non-zero elements of $R_g \setminus R_g^*$.
A concrete example is given by
$g=3$ and $L = \cO(-\divi(1-t^5))$.
(See the last line in this proof for the case $g=1, 2$.)

(3) The first statement follows from the Riemann-Roch theorem.
The second statement is an effect of elements of $R_g \setminus R_g^*$.
A concrete example is given by $g=2$ and $L = \cO(-\divi(1-t^3))$.
(See the last line in this proof for the case $g=1$.)

(4)
We take $L \in J_g \setminus \Theta_g$.
Then $X_{g}$ is of rank $g$ by (1).
Hence $X_{g+2}$ is also of rank $g$,
and the linear equation
\[ (*) \qquad X_{g+2} \vec{b} = 0 \]
has two independent solutions 
$\vec{b} = (b_i)_{i=0}^{g+1} \in \C^{g+2}$.
By Lemma \ref{keylemma} (2),
it is enough to show
that (at least) one of these two solutions 
satisfies $b_0 \not= 0$.
We suppose that there exist
two independent solutions to $(*)$ with $b_0=0$.
Because the first row of $(*)$ reads 
$a_1 b_0 + b_1 = 0$, we have $b_1=0$ as well.
Let $Y$ be the
lower-right $((g-1) \times g)$-submatrix of $X_{g+2}$.
(This is to say $Y$ is constructed by removing 
the top row and the left two columns from $X_{g+2}$.)
Then $Y \vec{c} = 0$ has two independent solutions.
Hence $Y$ is of rank $\leqslant g-2$.
However, 
since $Y$ is the same as a submatrix of $X_{g}$
(obtained by removing the bottom row from $X_{g}$)
this implies $X_{g}$ is of rank $\leqslant g-1$.
This contradicts the fact that the rank of $X_g$ is $g$.

The implication $b_0 = 0 \Rightarrow b_1=0$
shows that the implication (1-c) $\Rightarrow$ (2-c) holds
when $k=2$ in Lemma \ref{keylemma}.
This explains why 
the equality holds in (2) for $g=1,2$, and in (3) for $g=1$.
\end{proof}

The following lemma gives 
an explicit formula for the map \eqref{jacobian},
composed with $\aj_{g, g}$,
restricted to an open dense subset:
\begin{lemma}\label{explicit_isom}
The composition
$\Sym^g(C_g \setminus \{O, \infty\}) 
\overset{\aj_{g, g}}{\to} J_g \cong \C^g$
is described as follows:
\[ D = \sum_{k=1}^g [\pi_g(\alpha_k)] \mapsto
  \vec{a} = (\frac{1}{2i-1} \sum_{k=1}^g \alpha_k^{-(2i-1)})_{i=1}^g
\qquad (\alpha_k \in \mathbb{P}^1 \setminus \{ O, \infty \}).
\]
\end{lemma}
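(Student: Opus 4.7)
The plan is to match the class of $f(t;\vec{a})$ against a representative of $\aj_{g,g}(D)$ in $\C[[t]]^*/\C[[t^2,t^{2g+1}]]^*$, reducing the matching to a coefficient comparison via the formal logarithm.

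Following the proof of Lemma~\ref{keylemma}(2), since all $\alpha_k \in \mathbb{P}^1 \setminus \{O,\infty\}$, the class $\aj_{g,g}(D) = \cO(D - g[\infty])$ is represented in $S^*/R_g^*$ by $h(t)^{-1}$, where $h(t) = \prod_{k=1}^g (1 - t/\alpha_k)$ is a polynomial with constant term $1$. By Definition~\ref{def:chi}(2), $\vec{a}$ corresponds to the class of $f(t;\vec{a}) = \exp(\sum_{i=1}^g a_i t^{2i-1})$. Passing to completions (using the second isomorphism of \eqref{jacobian}), the lemma reduces to showing
\[
\exp\Bigl(\sum_{i=1}^g a_i t^{2i-1}\Bigr) \equiv h(t)^{-1} \pmod{\C[[t^2,t^{2g+1}]]^*}.
\]

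Both sides lie in the subgroup $1 + t\C[[t]]$ of $\C[[t]]^*$, on which the formal logarithm is an isomorphism onto $(t\C[[t]], +)$. A short parity argument shows that $\C[[t^2,t^{2g+1}]]$ is exactly the subring of $\C[[t]]$ whose coefficients of $t, t^3, \dots, t^{2g-1}$ vanish, and that this vanishing is preserved under $\exp$ and $\log$ on units with constant term $1$: the only way to reach an odd exponent $\leqslant 2g-1$ by summing allowed exponents $\{2,4,\dots,2g,2g+1,2g+3,\dots\}$ is ruled out, because a single odd summand is $\geqslant 2g+1$ and any two odd summands produce an even total. Consequently, the displayed congruence is equivalent to equality of the coefficients of $t^{2i-1}$ in $\log f(t;\vec{a})$ and in $\log h(t)^{-1}$ for $i = 1, \dots, g$.

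Finally, $\log f(t;\vec{a}) = \sum_{i=1}^g a_i t^{2i-1}$ tautologically, while the Mercator series gives
\[
\log h(t)^{-1} = -\sum_{k=1}^g \log(1 - t/\alpha_k) = \sum_{n \geqslant 1} \frac{t^n}{n}\sum_{k=1}^g \alpha_k^{-n}.
\]
Reading off the coefficient of $t^{2i-1}$ yields $a_i = \frac{1}{2i-1}\sum_{k=1}^g \alpha_k^{-(2i-1)}$, as claimed. The only substantive step is the $\log$-characterization of $\C[[t^2,t^{2g+1}]]^*_1$; everything else is formal manipulation of power series.
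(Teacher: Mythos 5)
Your proof is correct and is in substance the same argument as the paper's: the paper establishes $(1-t)\exp\bigl(\sum_{j=1}^{g}t^{2j-1}/(2j-1)\bigr)\in\C[[t^2,t^{2g+1}]]$ via the very same computation $\log(1-t)=-\sum_{n\geqslant 1}t^n/n$ and then multiplies the copies with $t$ replaced by $\alpha_k^{-1}t$, which is precisely your comparison of the odd log-coefficients of $f(t;\vec a)$ and $h(t)^{-1}$; your version merely makes explicit the parity lemma (for constant-term-one units, membership in $\C[[t^2,t^{2g+1}]]^*$ is detected by the vanishing of the log-coefficients of $t,t^3,\dots,t^{2g-1}$) that the paper leaves implicit in its ``Consequently'' step. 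One micro-remark: in your parity argument the phrase ``any two odd summands produce an even total'' should also cover an odd number $\geqslant 3$ of odd summands, but that case is harmless since the total is then at least $3(2g+1)$, so the claim stands.
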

\begin{proof}
We claim that the formal power series defined by $\Xi(t) := (1-t)\exp(\sum_{j=1}^{\infty} \frac{t^{2j-1}}{2j-1})$
belongs to $\C[[t^2]]$.  Indeed, we have
%
\begin{align*}
\Xi(t) &= 
\exp(\log(1-t) + \sum_{j=1}^{\infty} \frac{t^{2j-1}}{2j-1})
= 
\exp(-\sum_{i=1}^{\infty} \frac{t^{i}}{i}
+\sum_{j=1}^{\infty} \frac{t^{2j-1}}{2j-1})
= \exp(-\sum_{i=1}^{\infty} \frac{t^{2i}}{2i}),
\end{align*}
which belongs to $\C[[t^2]]$.  Consequently, we have
\[ \Xi_g(t) := (1-t) \exp(\sum_{j=1}^g \frac{t^{2j-1}}{2j-1}) 
\in \C[[t^2, t^{2g+1}]].
\]
By replacing $t$ by $\alpha_k^{-1}t$ for $k=1, \dots, g$ and taking a product, we get
\[ 
\prod_{k=1}^g \Xi_g(\alpha_k^{-1}t)
=\prod_{k=1}^g(1-\alpha_k^{-1}t) 
   \exp(\sum_{i=1}^g \frac{(\alpha_k^{-1}t)^{2i-1}}{2i-1})
=h(t) f(t; \vec{a})
\in \C[[t^2, t^{2g+1}]],
\]
where $h(t) = \prod_{k=1}^g(1-\alpha_k^{-1}t)$.
Now we take 
$D := \sum_{k=1}^g [\pi_g(\alpha_k)] 
\in \Sym^g(C_g \setminus \{O, \infty\})$.
Then $\aj_{g, g}(D)$ is represented by $h(t)^{-1}$
in $\C[[t]]^*/\C[[t^2, t^{2g+1}]]^*$.
The above calculation shows that,
in $\C[[t]]^*/\C[[t^2, t^{2g+1}]]^*$,
the class of $h(t)^{-1}$ is the same as $f(t; \vec{a})$,
which represents $\vec{a}$.
This completes the proof.
\end{proof}

\subsection{Compactification of the Generalized Jacobian}
We write $\bar{J}_g$ for the compactified Jacobian of $C_g$
which parametrizes isomorphism classes of
torsion free $\cO_{C_g}$-modules $\mathcal{L}$ 
of rank one such that 
$h^0(C_g, \mathcal{L})-h^1(C_g, \mathcal{L})=1-g$
(see \cite{dsouza, rego}).
We have a natural inclusion $J_g \subset \bar{J}_g$,
by which we regard $J_g$ as
a Zariski dense open subscheme in $\bar{J}_g$ 
We also define 
$\bar{\Theta}_g = 
\{ \mathcal{L} \in \bar{J}_g ~|~ 
  h^0(C_g, \mathcal{L}(g-1)) \not= 0 \}.$
Similarly to $\Theta_g$,
we see that
$\bar{\Theta}_g$ is the same as 
$\bar{\Theta}_g(x^{2g+1})$,
defined in (\ref{def:theta_2}).

The normalization $\pi_g: \mathbb{P}^1 \to C_g$ factors as
$\mathbb{P}^1 \overset{\pi_k}{\to} C_k \overset{\pi_{k,g}}{\to} C_g$
for $k= 1, \dots, g$.
Explicitly, $\pi_{k, g}$ is given by
$\pi_{k, g}(x,y) = (x, x^{g-k}y)$.
We have a push-forward
$(\pi_{k, g})_*: \bar{J}_{k} \to \bar{J}_g$.
We also have an action of $J_g$ on $\bar{J}_g$
defined by $L \cdot \mathcal{L} = L \otimes \mathcal{L}$
for $L \in J_g$ and $\mathcal{L} \in \bar{J}_g$.

\begin{lemma}\label{cpt}
Let $k \in \{1, \dots, g \}$.
\begin{enumerate}
\item
The push-forward defines an isomorphism
$(\pi_{g-1, g})_*: \bar{J}_{g-1} \to \bar{J}_g \setminus J_g$.
\item
For any $L \in J_{g}$ and $\mathcal{L} \in \bar{J}_k$,
we have
$(\pi_{k, g})_*((\pi_{k, g})^*L \cdot \mathcal{L}) 
=L \cdot (\pi_{k, g})_* \mathcal{L}$
\item
We have a commutative diagram of algebraic groups
\[
\begin{matrix}
\C^g & \overset{\eqref{jacobian}}{\cong} & J_g \qquad \\[1mm]
\downarrow & & \downarrow_{(\pi_{k, g})^*} \\
\C^{k} & \overset{\eqref{jacobian}}{\cong} & J_{k} \qquad
\end{matrix}
\]
where the left vertical map is defined by
$(a_i)_{i=1}^{g} \mapsto (a_i)_{i=1}^{k}$.
\item
For any $\mathcal{L} \in \bar{J}_k$,
we have $\mathcal{L} \in \bar{\Theta}_k$ if and only if 
$(\pi_{k, g})_* \mathcal{L} \in \bar{\Theta}_g$.
\end{enumerate}
\end{lemma}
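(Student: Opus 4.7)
My plan is to treat the four items roughly in the order (2), (3), (1), (4): parts (2) and (3) are essentially formal, part (1) carries the main geometric content, and part (4) will follow from projection-formula manipulations using part (1).

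Part (2) is the standard projection formula for the finite morphism $\pi_{k,g}$: since $L \in J_g$ is invertible, $\pi_{k,g}^*L$ is invertible, and the formula applies directly. For Part (3), I would use the chain of isomorphisms \eqref{jacobian} for both $C_g$ and $C_k$: the pullback $\pi_{k,g}^* : J_g \to J_k$ corresponds to the natural surjection $S^*/R_g^* \twoheadrightarrow S^*/R_k^*$ coming from $R_g \subset R_k$. Since $R_k = \C[[t^2,t^{2k+1}]]$ contains $t^{2i-1} = t^{2(i-k-1)}\cdot t^{2k+1}$ for every $i > k$, we have $\exp(a_i t^{2i-1}) \in R_k^*$ for $i > k$; hence $f(t;(a_1,\dots,a_g)) \equiv f(t;(a_1,\dots,a_k)) \pmod{R_k^*}$, which is exactly the claim.

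Part (1) is the heart of the argument. I would stratify $\bar J_g$ by the local type of the torsion-free sheaf at the unique singular point $O$. A rank-one torsion-free module $M$ over the complete local ring $R_g$ has endomorphism ring $\mathrm{End}(M)$ equal to one of the intermediate orders between $R_g$ and $\C[[t]]$; a short check using the gap structure of $R_g$ shows these are precisely $R_g, R_{g-1},\dots,R_0$, and $M$ is invertible over $\mathrm{End}(M)$. The case $\mathrm{End}(M) = R_g$ recovers $J_g$; for $\mathcal L \in \bar J_g \setminus J_g$ one has $\mathrm{End}(M) \supseteq R_{g-1}$, so $\mathcal L$ admits a canonical $\cO_{C_{g-1}}$-module structure extending its $\cO_{C_g}$-structure, i.e.\ $\mathcal L = (\pi_{g-1,g})_* \mathcal L'$ for a unique torsion-free rank-one sheaf $\mathcal L'$ on $C_{g-1}$. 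This yields the set-theoretic bijection; upgrading it to an algebraic isomorphism via functoriality of pushforward on flat families (\cite{dsouza,rego}) is where I expect the main difficulty to lie, since one must verify that no extra infinitesimal identifications appear on the non-generic strata.

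For Part (4), I would combine the projection formula with affineness of $\pi_{k,g}$. The latter gives $H^i(C_g,(\pi_{k,g})_*\mathcal F) = H^i(C_k, \mathcal F)$ for all $\mathcal F$ and $i$. Since $\pi_{k,g}$ is an isomorphism near $\infty$ (check in coordinates $y/x^{g+1}$ on $C_g$ and $y/x^{k+1}$ on $C_k$), we have $\pi_{k,g}^* \cO_{C_g}([\infty]) = \cO_{C_k}([\infty])$, so the projection formula reduces the computation of $h^0(C_g,((\pi_{k,g})_*\mathcal L)(g-1))$ to that of $h^0(C_k,\mathcal L(k-1))$ after tracking the degree-shift implicit in the identification $(\pi_{k,g})_*: \bar J_k \hookrightarrow \bar J_g$ from iterating part (1). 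The equivalence then follows from the definitions of $\bar\Theta_g$ and $\bar\Theta_k$.
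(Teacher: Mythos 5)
Your proposal is correct and follows essentially the same route as the paper: part (2) by the projection formula, part (3) by reading off the pullback through the chain of isomorphisms \eqref{jacobian}, part (4) from $h^0$-preservation under the finite map $\pi_{k,g}$, and part (1) from the classification of rank-one torsion-free modules over the local ring at $O$ (your ``$M$ is invertible over its endomorphism ring $R_k$'' is precisely the paper's ``elementary fact'' that any such submodule of $\C(t)$ equals $f(t)R_k$). The scheme-theoretic upgrade you flag as the main difficulty is exactly what the paper disposes of by citing \cite[Lemma 3.1]{Beauville99}, namely that $(\pi_{k,g})_*:\bar{J}_k\to\bar{J}_g$ is a closed embedding (understood, as you correctly sense with your ``degree-shift'' caveat, with the Euler-characteristic--preserving twist by a multiple of $[\infty]$).
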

\begin{proof}
It is proved in \cite[Lemma 3.1]{Beauville99}  that
$(\pi_{k, g})_*: \bar{J}_k \to \bar{J}_g$ is a closed embedding.
Now (1) follows by induction from the elementary fact that 
any torsion-free $R_g$-submodule $M$ of rank one in $\C(C_g)=\C(t)$
satisfies $f(t) M = R_k$ for some $f(t) \in \C(t)^*$
and $k=0, 1, \dots, g$. (We set $R_0 =S$ by convention.)
(2) is a direct consequence of the projection formula.
(3) follows from the description of the isomorphism \eqref{jacobian}.
Since $\pi_{k, g}$ is a finite map, we have
$h^0(C_k, \mathcal{L}) = h^0(C_g, (\pi_{k, g})_* \mathcal{L})$,
which proves (4).
\end{proof}

\section{Singular fiber of the Mumford system with additive degeneration}

We use the notations of \S 2.  We apply the results of the previous section to study the level set
$M_g(0):=M_g(x^{2g+1})$ of the genus $g$ Mumford system.

\subsection{Matrix realization of the generalized Jacobian}
Let us take $h(x) := x^{2g+1} \in \Hg$.
Then the spectral curve $C_g(h)$ is precisely $C_g$
considered in the previous section.
We write $M_g(0)$ and $M_g(0)_{reg}$ for $M_g(h)$
and $M_g(h)_{reg}$.
We define a map
\[ i_{g}: M_{g-1}(0) \to M_g(0) 
\qquad i_{g}(l(x))=x l(x). 
\]
\begin{lemma}
Let $l(x) \in M_g(0)$.
Then $l(x)$ is in $M_g(0)_{reg}$
iff $l(0) \not= 0.$
In other words,
we have $M_g(0)_{reg} 
= M_g(0) \setminus i_{g}(M_{g-1}(0))$.
\end{lemma}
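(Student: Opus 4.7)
The plan is to translate the condition ``$l(a)$ regular for all $a\in\mathbb{P}^1$'' into a purely polynomial condition and then use the spectral equation $u(x)w(x)+v(x)^2=x^{2g+1}$ to see that the only place where regularity can fail is $a=0$.

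First I would observe that a $2\times 2$ matrix is non-regular precisely when it is a scalar matrix (both eigenspaces equal to all of $\C^2$). For $l(a)=\begin{pmatrix} v(a)&w(a)\\ u(a)&-v(a)\end{pmatrix}$, being scalar forces $v(a)=-v(a)$ and the off-diagonal entries to vanish, so $v(a)=0$, $u(a)=0$, $w(a)=0$; equivalently $l(a)=0$. The paper already records that $l(\infty)$ is regular, so only finite points $a$ matter.

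Next I would rule out $l(a)=0$ for $a\neq 0$ using the spectral identity. If $u(a)=v(a)=w(a)=0$ with $a\in\C$, then $(x-a)\mid u(x)w(x)$ and $(x-a)^2\mid v(x)^2$, hence $(x-a)\mid u(x)w(x)+v(x)^2=x^{2g+1}$, forcing $a=0$. Combined with the previous paragraph, this yields the first assertion: $l(x)\in M_g(0)_{reg}\iff l(0)\neq 0$.

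For the second equality $M_g(0)_{reg}=M_g(0)\setminus i_g(M_{g-1}(0))$ I would argue in both directions. If $l(x)=i_g(l'(x))=x\,l'(x)$ with $l'(x)\in M_{g-1}(0)$, then clearly $l(0)=0$. Conversely, assume $l(0)=0$, so $u(0)=v(0)=w(0)=0$ and we may write $u(x)=x u'(x)$, $v(x)=x v'(x)$, $w(x)=x w'(x)$; a quick check of degrees shows $u'$ is monic of degree $g-1$, $\deg v'\leqslant g-2$, and $w'$ is monic of degree $g$, so $l'(x):=\begin{pmatrix} v'(x)&w'(x)\\ u'(x)&-v'(x)\end{pmatrix}$ lies in $M_{g-1}$; moreover $-\det l'(x)=x^{-2}(-\det l(x))=x^{2g-1}$, so $l'(x)\in M_{g-1}(0)$, and $l(x)=i_g(l'(x))$. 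There is no real obstacle here beyond verifying the degree bookkeeping in the factorization step; the whole argument rests on the observation that a non-regular value of $l$ at a finite point forces a common root of $u,v,w$, which by the spectral equation must be $0$.
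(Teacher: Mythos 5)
Your proof is correct and follows essentially the same route as the paper: a traceless $2\times 2$ matrix is non-regular exactly when it vanishes, and a common root $a$ of $u,v,w$ divides $u(x)w(x)+v(x)^2=x^{2g+1}$, forcing $a=0$. The only difference is that you spell out the factorization $l(x)=x\,l'(x)$ with the degree bookkeeping identifying $\{l(0)=0\}$ with $i_g(M_{g-1}(0))$, which the paper leaves implicit; this is a fine addition but not a different argument.
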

\begin{proof}
We first remark that
a traceless $2$ by $2$ matrix $A$ is regular
iff $A \not= 0$.
Hence $l(x) \in M_g(0)_{reg}$
iff $l(c) \not= 0$ for all $c \in \C$.
If $l(c) = 0$ for some $c \in \C$,
then $x^{2g+1}=-\det l(x)$ is divisible by $x-c$,
thus $c$ must be $0$.
\end{proof}

Combined with Theorem \ref{thm:beauville},
we obtain
\begin{theorem}\label{cor_beauville}
There exist isomorphisms
\[
 \bar{\phi}_g: M_g(0) 
    \cong \bar{J}_g \setminus \bar{\Theta}_g
\qquad \text{and} \qquad
 \phi_g: M_g(0) \setminus i_g(M_{g-1}(0)) 
   \cong J_g \setminus \Theta_g. 
\]
\end{theorem}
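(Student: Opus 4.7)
The plan is to obtain Theorem \ref{cor_beauville} as a direct specialization of Beauville's Theorem \ref{thm:beauville} combined with the lemma just proved.

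First, I would set $h(x) = x^{2g+1}$ in Theorem \ref{thm:beauville}. The associated spectral curve $C_g(h)$ is then exactly the singular hyperelliptic curve $C_g$ studied in Section 3, so the generalized Jacobian $J_g(h)$, its compactification $\bar{J}_g(h)$, and the divisors $\Theta_g(h)$, $\bar{\Theta}_g(h)$ agree respectively with $J_g$, $\bar{J}_g$, $\Theta_g$, $\bar{\Theta}_g$ as introduced in Section 3. For $\Theta_g$ and $\bar{\Theta}_g$ this identification has already been recorded in the definitions in Section 3 (where $\Theta_g$ is defined as the zero locus of $\det(X_g)$ and identified with $\Theta_g(x^{2g+1})$ via Corollary \ref{keycor}(1), and analogously for $\bar{\Theta}_g$). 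Theorem \ref{thm:beauville} thus yields at once the isomorphism
\[ \bar{\phi}_g: M_g(0) \cong \bar{J}_g \setminus \bar{\Theta}_g. \]

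For the second isomorphism, Beauville's theorem further specializes to $M_g(0)_{reg} \cong J_g \setminus \Theta_g$, as a restriction of $\bar{\phi}_g$. It then suffices to invoke the lemma established immediately above the theorem, which identifies
\[ M_g(0)_{reg} = M_g(0) \setminus i_g(M_{g-1}(0)). \]
Defining $\phi_g := \bar{\phi}_g|_{M_g(0) \setminus i_g(M_{g-1}(0))}$ then gives the desired isomorphism.

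There is essentially no obstacle, as the statement is a packaging of Beauville's theorem and the preceding lemma; the only conceptual point to verify is the compatibility of the two definitions of the theta divisor in the singular case, namely that Definition of $\Theta_g$ via $\det(X_g)$ coincides with $\Theta_g(x^{2g+1})$ from \eqref{def:theta_1}, which is precisely the content of Corollary \ref{keycor}(1). The analogous compatibility for $\bar{\Theta}_g$ is built into its definition in \S 3.2.
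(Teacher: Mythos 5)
Your proposal is correct and follows the paper's own route: the paper obtains the theorem precisely by specializing Beauville's Theorem \ref{thm:beauville} to $h(x)=x^{2g+1}$, using the identifications of $\Theta_g$ and $\bar{\Theta}_g$ with $\Theta_g(x^{2g+1})$ and $\bar{\Theta}_g(x^{2g+1})$ made in \S 3, and combining this with the lemma $M_g(0)_{reg} = M_g(0)\setminus i_g(M_{g-1}(0))$ proved just before the statement.
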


\begin{remark}\label{rem:phi}
We give an explicit description of $\phi_g$.
(Compare with Remark \ref{MandB}.)
Take $l(x) \in M_g(0) \setminus i_g(M_{g-1}(0))$.
Because of the relation $u_0 w_0 + v_0^2 = 0$,
we have $u_0 \not= 0$ or $w_0 \not= 0$.
In the first case,
$l(x)$ is mapped to the invertible sheaf
corresponding to the divisor $\sum_{i=1}^g [\alpha_i] - g [\infty]$,
where $\alpha_i = v(x_i)/x_i^g$ with $u(x) = \prod_{i=1}^g (x-x_i)$.
In the second case,
$l(x)$ is mapped to the invertible sheaf
corresponding to the divisor 
$\sum_{j=1}^{g+1} [-\beta_j] - (g+1) [\infty]$,
where $\beta_j = v(x_j)/x_j^g$ with 
$w(x) = \prod_{j=1}^{g+1} (x-x_j)$.
Note that for $l(x)$ with $u_0 w_0 \not= 0$,
the two definitions give the same divisor class.
Indeed,
one has
\[ \sum_i [\alpha_i] - g[\infty] 
\equiv
   - \sum_j [\beta_j] + (g+1)[\infty] 
\equiv
   \sum_j [-\beta_j] - (g+1)[\infty],
\]
where the first equivalence is seen by
$\divi(t^{2g+1} - v(t^2)) 
= \sum_i [\alpha_i] + \sum_j [\beta_j] - (2g+1) [\infty]$,
while the second follows from 
$\divi(1 - \frac{t^2}{\gamma^2}) = [\gamma] + [-\gamma] - 2 [\infty]$
for any $\gamma \in \C \setminus \{ 0 \}$.
We shall consider
the inverse map of $\phi_g$ in \S \ref{inverse}.
\end{remark}

\subsection{The Hamiltonian vector fields}
\begin{theorem}\label{main}
The vector fields $\X_1, \dots, \X_g$ on $M_g(0)$ are
linearized by the isomorphism $\bar{\phi}_g$
to the vector fields induced 
by the action of $J_g$ on $\bar{J}_g$.
More precisely, we have the following:
\begin{enumerate}
\item
For any $i=1, \dots, g$,
the vector fields $\X_i$
on $\Mg(0) \setminus i_g(M_{g-1}(0))$
are mapped to (the restriction of)
the invariant vector fields $\frac{\pt}{\pt a_{i}}$ on $\C^g$
by the isomorphism 
$\phi_g$
in Theorem \ref{cor_beauville}
composed with \eqref{jacobian}.
\item
The map $i_{g}: M_{g-1}(0) \to M_{g}(0)$
is a closed embedding.
The vector fields $\X_1, \dots, \X_{g-1}$ on $M_{g-1}(0)$
are mapped to $\X_1, \dots, \X_{g-1}$ on $M_{g}(0)$ by 
$i_{g}$,
while the vector field $\X_g$ is zero on $i_{g}(M_{g-1}(0))$.
\item
The level set $M_g(0)$ is stratified by
$g+1$ smooth affine varieties,
which are invariant for the flows of the vector fields
$\X_1, \dots, \X_g$;
they are isomorphic to $\C^{k} \setminus \Theta_k$
for $k=0, \dots, g$.
\end{enumerate}
\end{theorem}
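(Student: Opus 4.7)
The proof breaks naturally into three parts matching the three clauses of the statement.

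For (1), the plan is to compute $\X_\ell(a_i)$ directly in local coordinates and verify that it equals $\delta_{i\ell}$. On the open dense subset of $M_g(0)_{reg}$ where $u(x) = \prod_k (x-x_k)$ has distinct nonzero roots, Remark \ref{rem:phi} identifies $\phi_g(\ell(x))$ with the divisor class of $\sum_k [\pi_g(\alpha_k)] - g[\infty]$ where $\alpha_k = v(x_k)/x_k^g$; since $v(x_k)^2 = x_k^{2g+1}$ on the spectral curve, one has $\alpha_k^2 = x_k$ and $\alpha_k^{2g+1} = y_k$. Lemma \ref{explicit_isom} then gives $a_i = \frac{1}{2i-1} \sum_k \alpha_k^{-(2i-1)}$. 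Differentiating with respect to $\X_\ell$ and using $\X_\ell \alpha_k = \X_\ell x_k / (2\alpha_k)$ reduces $\X_\ell a_i$ to
\[
\X_\ell a_i = -\tfrac{1}{2} \sum_k x_k^{g-i}\,\frac{\X_\ell x_k}{y_k}.
\]
Mumford's identity $\sum_k x_k^{j-1} \X_\ell x_k / y_k = -2 \delta_{j, g-\ell+1}$, extracted from \eqref{time-evol} and \eqref{vf-xy}, then evaluates this to $\delta_{i\ell}$. By density this extends to all of $M_g(0)_{reg}$.

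For (2), the closed embedding assertion is immediate from a coordinate inspection: $i_g$ identifies $M_{g-1}(0)$ with the closed linear subvariety $\{u_0 = v_0 = w_0 = 0\} \subset M_g(0)$. For the vector fields, my plan is to substitute $u(x) = x\tilde u(x)$, $v(x) = x\tilde v(x)$, $w(x) = x\tilde w(x)$ into \eqref{time-evol}; each of the three right-hand sides acquires a common factor $xz$, yielding the identity $D(z)(x\tilde u(x)) = xz\, \tilde D(z)\tilde u(x)$ on $i_g(M_{g-1}(0))$, where $\tilde D(z) = \tilde\X_{g-1} + z\tilde\X_{g-2} + \cdots + z^{g-2}\tilde\X_1$, together with analogous identities for $\tilde v, \tilde w$. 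Matching coefficients of $z^j$ on both sides yields $\X_g = 0$ on the image and $\X_i = \tilde\X_i$ for $i = 1, \dots, g-1$. Note in particular that the coefficient of $z^0$ of $\X_{g-j}u$ equals zero for every $j$, which shows $\X_i$ is tangent to the image.

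For (3), I would iterate (2) to produce a descending chain
\[
M_g(0) \supset i_g(M_{g-1}(0)) \supset (i_g \circ i_{g-1})(M_{g-2}(0)) \supset \cdots \supset \{*\},
\]
whose $k$-th successive complement is isomorphic to $M_k(0)_{reg}$, which by Theorem \ref{cor_beauville} composed with \eqref{jacobian} and the definition of $\Theta_k$ is isomorphic to $\C^k \setminus \Theta_k$. Invariance of each stratum under the flows follows directly from (2): $\X_i$ preserves the ideal $(u_0, v_0, w_0)$, hence preserves $i_g(M_{g-1}(0))$, and the claim then follows by induction on $g$.

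The main obstacle lies in (1), specifically in justifying the use of Mumford's local formulas on a singular spectral curve. The resolution is that the argument of the proof of Theorem \ref{thm:mumford} is entirely local at the smooth points $(x_k, v(x_k))$ of $C_g$ with $x_k \ne 0$, and so it carries over verbatim to $M_g(0)_{reg}$. Once (1) is established, (2) is a straightforward substitution and (3) is a formal consequence.
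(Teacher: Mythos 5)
Your proposal is correct, and for part (1) it is essentially the paper's own argument: reduce by holomorphy to the dense open locus where $u$ has distinct nonzero roots, carry over Mumford's local computation (which only uses smooth points of the spectral curve, hence survives the singularity at $O$), and combine \eqref{vf-xy} with Lemma \ref{explicit_isom}; the paper packages this via the open immersion $\aj_{g,g}|_{\mathcal{S}'}$ and the coordinates $\alpha_i$ (its display \eqref{vf-alpha}), while you run the chain rule in the $x_k$, which is the same calculation. Where you genuinely diverge is (2): the paper deduces (2) from (1) together with Lemma \ref{cpt}, i.e.\ from the geometry of $\bar{J}_g$, the equivariance of $(\pi_{g-1,g})_*$ and the coordinate projection $(a_i)_{i\leqslant g}\mapsto (a_i)_{i\leqslant g-1}$, whereas you verify it directly by substituting $u=x\tilde{u}$, $v=x\tilde{v}$, $w=x\tilde{w}$ into \eqref{time-evol}. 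That computation is right: each right-hand side acquires the factor $xz$, the $z^0$-coefficient gives $\X_g=0$ on the image, the coefficient of $z^j$ ($j\geqslant 1$) gives $\X_{g-j}u(x)=x\,\tilde{\X}_{g-j}\tilde{u}(x)$ and its analogues, and tangency comes from the overall factor of $x$, i.e.\ from the vanishing of the \emph{$x^0$-coefficients} $\X_i u_0=\X_i v_0=\X_i w_0$ on the image (your parenthetical says ``coefficient of $z^0$'', which instead gives the $\X_g=0$ statement -- a harmless slip, but you should also record the $v$- and $w$-equations, not just $u$). What your route buys is a self-contained, purely algebraic proof of (2) that never invokes Beauville's isomorphism nor the implicit compatibility $\bar{\phi}_g\circ i_g=(\pi_{g-1,g})_*\circ\bar{\phi}_{g-1}$; what the paper's route through Lemma \ref{cpt} buys is precisely the headline sentence of the theorem, the identification of the boundary flows with the $J_g$-action on $\bar{J}_g\setminus J_g$, so if you want that statement literally (and not only items (1)--(3)) you still need Lemma \ref{cpt}(2),(3) to translate your version of (2) into the action language. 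Part (3) is handled the same way in both proofs.
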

\noindent
{\it Proof.}
(2) follows from (1) and Lemma \ref{cpt}.
(3) is a consequence of (1) and (2).
We prove~(1).
Since the vector fields in question are all holomorphic,
it suffices to show this assertion
on some open dense subset.
We define 
\[ \mathcal{S}' : = \{ \sum_{i=1}^g [(x_i, y_i)]
  \in \Sym^g (C_g \setminus \{ O, \infty \})
    ~|~
  x_i \not= x_j ~\text{for all}~ i \not= j \}.
\]
\begin{lemma}
  The map $\aj_{g, g}$, restricted to $\mathcal{S}'$, is an open immersion whose image is a dense open subset of
  $J_g$.
\end{lemma}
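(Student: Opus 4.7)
My plan is to use the explicit coordinate formula from Lemma~\ref{explicit_isom} and then verify separately that $\aj_{g,g}|_{\mathcal{S}'}$ is (i) \'etale and (ii) injective on $\mathcal{S}'$. These two properties together imply it is an open immersion (since an \'etale monomorphism is an open immersion), and density of the image follows because $\C^g \cong J_g$ is irreducible and the image is nonempty open.

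Concretely, via the normalization $\pi_g:\mathbb{P}^1\to C_g$, identify $\mathcal{S}'$ with the open subvariety of $\Sym^g(\C^*)$ cut out by the conditions $\alpha_i\neq\pm\alpha_j$ for $i\neq j$ (which in particular forces the $\alpha_k$ to be pairwise distinct). Composed with the isomorphism $J_g\cong\C^g$ of \eqref{jacobian}, the map $\aj_{g,g}|_{\mathcal{S}'}$ becomes
\[
\sigma:\{\alpha_1,\dots,\alpha_g\}\mapsto \Bigl(\tfrac{1}{2i-1}\sum_{k=1}^g \alpha_k^{-(2i-1)}\Bigr)_{i=1}^g.
\]
For \'etaleness, lift $\sigma$ to the \'etale $S_g$-cover $(\C^*)^g\to\Sym^g(\C^*)$ (which restricts to an \'etale cover of $\mathcal{S}'$ because the $\alpha_k$ are pairwise distinct there). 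The Jacobian of the lift at $(\alpha_1,\dots,\alpha_g)$ has $(i,k)$-entry $-\alpha_k^{-2i}$, and factoring columns yields $(-1)^g\prod_k\alpha_k^{-2}\cdot\prod_{k<l}(\alpha_l^{-2}-\alpha_k^{-2})$ for its determinant, which is nonzero precisely on $\mathcal{S}'$.

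The main obstacle is injectivity. Given $D,D'\in\mathcal{S}'$ with $\sigma(D)=\sigma(D')$, set $R(t):=\prod_k(1-\alpha_k^{-1}t)$ and $R'(t):=\prod_k(1-(\alpha_k')^{-1}t)$. By the proof of Lemma~\ref{explicit_isom}, $\aj_{g,g}(D)$ and $\aj_{g,g}(D')$ are represented in $S^*$ by $R(t)^{-1}$ and $R'(t)^{-1}$, so the hypothesis is equivalent to $f(t):=R(t)/R'(t)\in R_g^*$. At the level of completions this says the Taylor expansion of $f$ at $t=0$ has no $t,t^3,\dots,t^{2g-1}$ terms, whence $f(t)-f(-t)$ vanishes to order at least $2g+1$ at $t=0$. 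On the other hand $f(t)-f(-t)=N(t)/\bigl(R'(t)R'(-t)\bigr)$, where $N(t):=R(t)R'(-t)-R(-t)R'(t)$ is odd in $t$ of degree $\leqslant 2g-1$ and the denominator takes the value $1$ at $t=0$. These two facts force $N\equiv 0$, i.e.\ $R(t)/R(-t)=R'(t)/R'(-t)$ as rational functions on $\mathbb{P}^1$. Comparing divisors gives $\sum_k[\alpha_k]-\sum_k[-\alpha_k]=\sum_k[\alpha_k']-\sum_k[-\alpha_k']$; the $\mathcal{S}'$-condition ensures that the positive and negative parts are disjoint in each tuple, so matching the $+1$ multiplicities yields $\{\alpha_k\}=\{\alpha_k'\}$, as desired.
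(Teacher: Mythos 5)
Your proof is correct, and it splits the work differently from the paper. The injectivity half is in substance the paper's own argument: the paper observes that equality of images makes the degree-$2g$ polynomial $\prod_i(1-\alpha_i^{-1}t)(1+\alpha_i'^{-1}t)$ lie in $R_g$, hence equal to its reflection $t\mapsto -t$, and then matches roots using the $\mathcal S'$ condition; your version keeps the ratio $R(t)/R'(t)\in R_g^*$, clears denominators, and kills the odd polynomial $N$ of degree $\leqslant 2g-1$ by its order of vanishing $\geqslant 2g+1$ --- the same mechanism (truncated odd part forces evenness) in a slightly different packaging. Where you genuinely diverge is the open-immersion/density half: the paper disposes of it with the one-line appeal ``the rest follows from Lemma \ref{keylemma} (2)'', i.e.\ it leans on the linear-algebra characterization of $\Image(\aj_{g,g})$, whereas you compute the differential of the explicit map of Lemma \ref{explicit_isom} on the ordered cover $(\C^*)^g$, get a Vandermonde-type determinant $(-1)^g\prod_k\alpha_k^{-2}\prod_{k<l}(\alpha_l^{-2}-\alpha_k^{-2})$ nonvanishing exactly on $\mathcal S'$, and conclude via ``\'etale $+$ injective $\Rightarrow$ open immersion'', with density from irreducibility of $\C^g$. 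Your route is more self-contained and makes the local isomorphism explicit (the determinant you compute is essentially the nondegeneracy that reappears in \eqref{vf-alpha} in the proof of Theorem \ref{main}), at the cost of invoking the standard fact that an injective \'etale morphism of complex varieties is an open immersion (to pass from injective to monomorphism you should note that over $\C$ all residue fields agree, or argue analytically); the paper's version is shorter because Lemma \ref{keylemma} (2) already packages the description of the image. Both are valid.
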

\begin{proof}
Suppose that $\sum_{i=1}^g [(x_i, y_i)]$ and $\sum_{i=1}^g [(x'_i, y'_i)]$ 
have the same image in $J_g$.  If we set $\alpha_i =
y_i/x_i^g,~ \alpha'_i=y'_i/{x'}_i^g$, then this amounts to saying that $f(t)=\prod_{i=1}^g (1-\alpha_i^{-1}
t)(1+{\alpha'}_i^{-1} t)$ is in $R_g$.  Since $f(t)$ is of degree $2g$, we must have $f(t)=f(-t)$.  This implies
$\sum_{i=1}^g [\alpha_i] = \sum_{i=1}^g [\alpha_i']$ 
(in $\Sym^g(\mathbb{P}^1 \setminus \{O, \infty\})$) by the definition of
$\mathcal{S}'$, and the injectivity follows.  The rest follows from Lemma \ref{keylemma} (2).
\end{proof}

Now we consider the vector fields on $\mathcal{S}'$.
Since the computation made in the proof of Theorem \ref{thm:mumford}
is valid in this situation, 
it follows by putting 
$x_i = \alpha_i^2, y_i = \alpha_i^{2g+1}$
in \eqref{vf-xy}
that,
with local coordinates $\alpha_i$,
the vector fields $\X_i$ are expressed by
  \begin{align}\label{vf-alpha}
    \begin{pmatrix}
      1 & 1 & \cdots & 1 \\ 
      \alpha_1^2 & \alpha_2^2 & \cdots & \alpha_g^2 \\
      \vdots & & & \vdots \\
      \alpha_1^{2g-2} & \alpha_2^{2g-2} & \cdots & \alpha_g^{2g-2} 
    \end{pmatrix}
    \begin{pmatrix}
      \frac{\X_g \alpha_1}{\alpha_1^{2g}} & 
      \frac{\X_{g-1} \alpha_1}{\alpha_1^{2g}} & \cdots &
      \frac{\X_1 \alpha_1}{\alpha_1^{2g}} \\
      \frac{\X_g \alpha_2}{\alpha_2^{2g}} & \cdots & 
      & \frac{\X_1 \alpha_2}{\alpha_2^{2g}} \\
      \vdots & & & \vdots \\
      \frac{\X_g \alpha_g}{\alpha_g^{2g}} & \cdots & 
      & \frac{\X_1 \alpha_g}{\alpha_g^{2g}} 
    \end{pmatrix}
    =
    - \mathbb{I}_g.
  \end{align}  
Using Lemma \ref{explicit_isom} and \eqref{vf-alpha} one computes that $\X_k a_i=\delta_{i,k},$ for $1\leqslant
i,k\leqslant g$, which leads to (1) in Theorem \ref{main}.

\section{Rational solution to the Mumford system}
%
In view of Theorem \ref{main},
an explicit description of the inverse map 
$$
\phi_g^{-1}: 
  J_g \setminus \Theta_g \to M_g(0) \setminus i_{g}(M_{g-1}(0))
  ;\quad \vec{a} \mapsto ~\begin{pmatrix} v(x) & w(x) \\ u(x) & -v(x)
                      \end{pmatrix}.
$$
of $\phi_g$
gives rise to a rational solution to the Mumford system.
This will be done in \S \ref{inverse},
then we present a concrete algorithm to compute
rational solutions in \S \ref{algorithm}.

\subsection{The map $\phi_g^{-1}$}\label{inverse}
We introduce some notations.
For $\vec{a}=(a_1,\cdots,a_g) \in \C^g$, 
let $\bar{X} = \bar{X}(\vec{a})$ be the $2g$ by $g$ matrix:
\begin{align*}
  \bar{X} =
  \begin{pmatrix}
    \chi_0 & 0 & \cdots \\
    \chi_1 & 0 & \cdots \\
    \chi_2 & \chi_0 & 0 & \cdots \\
    \vdots \\
    \chi_g & \chi_{g-2} & \cdots \\
    \chi_{g+1} & \chi_{g-1} & \cdots \\
    \vdots \\
    \chi_{2g-2} & \chi_{2g-4} & \cdots & \chi_2 & \chi_0\\    
    \chi_{2g-1} & \chi_{2g-3} & \cdots & \chi_3 & \chi_1\\    
  \end{pmatrix},
\end{align*}
where $\chi_i = \chi_i(\vec{a})$ are given by Definition \ref{def:chi} (1).  We write $\bar{X}_g =
\bar{X}_g(\vec{a})$ for the submatrix consisting of the last $g$ rows of $\bar{X}$.  We remark that $\bar{X}$ and
$X_{2g}$ of Definition \ref{def:X} are closely related (for instance we have $\det \bar{X}_g = \det X_{g}$), but
they come out of different contexts, and it seems more natural to use both of them.  We divide $\bar{X}$ into a
$g+1$ by $g$ matrix $A=A(\vec{a})$, a $g-1$ by $g-1$ matrix $B=B(\vec{a})$ and a vector
$\vec{\phi}=\vec{\phi}(\vec{a}) = \,^t (\phi_1,\cdots,\phi_{g-1})$:
\begin{align*}
  \begin{split}
    &A_{i,j} = \bar{X}_{i,j} = \chi_{i-2j+1}
~\text{ for } 1 \leqslant i \leqslant g+1, ~ 1 \leqslant j \leqslant g, \\
    &B_{i,j} = \bar{X}_{g+1+i,1+j} = \chi_{i-2j+g}
~\text{ for } 1 \leqslant i,j \leqslant g-1,\\
    &\phi_i = \bar{X}_{g+1+i,1} =  \chi_{i+g}
~\text{ for } 1 \leqslant i \leqslant g-1.
  \end{split}
\end{align*} 
Let $\tau_g = \tau_g(\vec{a})$ be 
the polynomial function on $\C^g$ given by
\begin{align}\label{tau-rKdV}
  \tau_g(\vec{a}) = \det \bar{X}_g(\vec{a}) (=\det X_{g}(\vec{a}) ).
\end{align}
Note that $\tau_g$ is essentially the Schur function associated to the partition $\nu = (g, g-1, \cdots, 1)$. 
(See (3) in the proof of Proposition \ref{prop:KdV}.)
Recall that the $g$ vector fields $\X_i$ on $M_g(0)$ induce the translation invariant vector fields $\X_i =
\frac{\partial}{\partial a_{i}}$ on $\C^g$ (Theorem \ref{main} (1)).  For a rational function $s \in
\C(a_1,\cdots,a_g)$ we write $s' := \frac{\partial}{\partial a_1}s$ and $s^{(k)} := \frac{\partial^k}{\partial
a_1^k} s$ for $k=1, 2, \dots$.

Let $U$ be the open subset of 
$J_g \setminus \Theta_g = 
\{\vec{a} \in \C^g ~|~ \tau_g(\vec{a}) \not= 0 \}$
defined by
\[
  U := \{ \vec{a} \in \C^g ~|~ \det B(\vec{a}) \neq 0 \text{ and } 
                         \tau_g(\vec{a}) \neq 0 \}.
\]
The next proposition is 
a key to an explicit formula for $\phi_g^{-1}$:
\begin{proposition}\label{defofpq}
Suppose $\vec{a} \in U$.
We denote by 
$p(t;\vec{a}) = \sum_{k=0}^g p_k t^k$
the polynomial, whose coefficients are defined by
\begin{align}\label{a-p}
  \vec{p} = \,^t(p_0,p_1,\cdots,p_g) 
          := A(\vec{a}) 
   \begin{pmatrix} 1 & 0 \\ 0 & -B(\vec{a})^{-1} \end{pmatrix}
   \begin{pmatrix} 1 \\ \vec{\phi}(\vec{a}) \end{pmatrix}.
\end{align}
Then $p(t;a)$ is the unique polynomial of degree at most $g$,
which satisfies $p_0 = 1$ and
\begin{align}\label{chi-h}    
    \sum_{k=0}^{2g-1} \chi_k t^k
    \equiv p(t;\vec{a}) 
    \text{ in } \C[[t]]^\ast / \C[[t^2,t^{2g+1}]]^\ast.
  \end{align} 
\end{proposition}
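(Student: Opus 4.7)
The plan is to introduce an auxiliary polynomial and read the matrix relation \eqref{a-p} as a polynomial congruence modulo $t^{2g}$. Set
\[
  q(t) := 1 + q_1 t^2 + q_2 t^4 + \cdots + q_{g-1} t^{2g-2},
\]
where $\vec q = \,{}^t(1, -B(\vec a)^{-1} \vec\phi(\vec a))$ is the vector appearing in \eqref{a-p}. Inspecting the shifted-diagonal form $A_{i,j}=\chi_{i-2j+1}$, $B_{i,j}=\chi_{i-2j+g}$, $\phi_i=\chi_{i+g}$ against the Cauchy product for $f_g(t) q(t)$ shows that $\vec p = A\vec q$ is precisely the statement $[t^k](f_g(t) q(t)) = p_k$ for $0 \leqslant k \leqslant g$ (in particular $p_0 = \chi_0 q_0 = 1$, since $\chi_n = 0$ for $n<0$), while the equation $B\,{}^t(q_1,\dots,q_{g-1}) = -\vec\phi$ implicit in the definition of $\vec q$ is precisely $[t^k](f_g(t) q(t)) = 0$ for $g+1 \leqslant k \leqslant 2g-1$. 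Taken together, they read
\[
  f_g(t)\, q(t) \equiv p(t) \pmod{t^{2g}}.
\]

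To establish the congruence \eqref{chi-h}, I divide by $p(t) q(t)$ — both units in $\C[[t]]$ since $p(0) = q(0) = 1$ — to obtain $f_g(t)/p(t) \equiv 1/q(t) \pmod{t^{2g}}$. Since $q$ is a polynomial in $t^2$, one has $1/q \in \C[[t^2]]$, so its coefficients of $t^1, t^3, \dots, t^{2g-1}$ all vanish. Thus the same vanishings hold for $f_g/p$, and together with $[t^0](f_g/p) = 1$ this means $f_g/p \in \C[[t^2, t^{2g+1}]]^*$, i.e.\ $f_g \equiv p$ in $\C[[t]]^*/\C[[t^2, t^{2g+1}]]^*$.

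For uniqueness, suppose $p'(t)$ is another polynomial of degree $\leqslant g$ with $p'_0 = 1$ and $f_g \equiv p'$ in the quotient. Then $p'(t)/f_g(t)$ lies in $\C[[t^2, t^{2g+1}]]^*$, so its reduction modulo $t^{2g}$ contains only even-degree terms and is a polynomial $q'(t)$ in $t^2$ of degree $\leqslant 2g-2$ with $q'(0) = 1$, satisfying $f_g(t)\, q'(t) \equiv p'(t) \pmod{t^{2g}}$. Reversing the equivalence of the first paragraph, the pair $(\vec{p'}, \vec{q'})$ satisfies $\vec{p'} = A\vec{q'}$ and $B\,{}^t(q'_1,\dots,q'_{g-1}) = -\vec\phi$. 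The assumption $\det B(\vec a) \neq 0$ on $U$ forces $\vec{q'} = \vec q$, and hence $p' = p$.

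The main obstacle is the index bookkeeping in the first paragraph: making precise that the two blocks of the matrix relation \eqref{a-p} correspond to the two disjoint ranges $\{0,\dots,g\}$ and $\{g+1,\dots,2g-1\}$ of coefficients of $f_g(t) q(t)$, with the shifted indices of $\chi_n$ in $A$, $B$, $\vec\phi$ matching the Cauchy convolution. Once this translation is in place, the substantive content — that $1/q$ lies in $\C[[t^2]]$ and therefore kills precisely the odd indices $\{1,3,\dots,2g-1\}$ cut out by the subring $\C[[t^2, t^{2g+1}]]$ of $\C[[t]]$ modulo $t^{2g}$ — is essentially formal.
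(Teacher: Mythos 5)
Your proof is correct and takes essentially the same route as the paper: both translate the block structure of $\bar{X}$ into the congruence $f_g(t)\,q(t)\equiv p(t) \bmod t^{2g}$ with an even polynomial $q(t)=1+q_1t^2+\cdots+q_{g-1}t^{2g-2}$, and use $\det B(\vec{a})\neq 0$ to determine $q$ uniquely. The only difference is that you spell out the equivalence between this mod-$t^{2g}$ congruence and the class equality \eqref{chi-h} in $\C[[t]]^*/\C[[t^2,t^{2g+1}]]^*$ (via $f_g/p\equiv 1/q$ and $1/q\in\C[[t^2]]$), a step the paper simply asserts with ``we see that''.
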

\begin{proof}
We see that \eqref{chi-h} with $p_0=1$
is equivalent to the existence of a polynomial 
$b(t) = 1+ \sum_{j=1}^{g-1} b_j t^{2j}$ such that
\begin{align}\label{b-h}
  p(t;\vec{a}) \equiv (\sum_{k=0}^{2g-1} \chi_k t^k) \cdot b(t)
  \mod t^{2g} \C[[t]].
\end{align}
Then we have
\begin{align*}
   \eqref{b-h} ~\Leftrightarrow~
   \bar{X} \vec{b} = \begin{pmatrix} 
                        \vec{p} \\ 0 \\ \vdots \\ 0 \end{pmatrix} 
   ~\Leftrightarrow \qquad~
   \begin{cases}
    \mathrm{(ia)} ~ A \vec{b} = \vec{p} \\
    \mathrm{(ib)} ~ (\vec{\phi} \, B) \vec{b} = \,^t (0,\cdots,0)
   \end{cases} 
\end{align*}
where $\vec{b} = \,^t(1,b_1,b_2,\cdots,b_{g-1})$.  
When $\det B \neq 0$, 
(ib) has the unique solution
$$
  \vec{b} 
= \begin{pmatrix} 1 & 0 \\ 0 & -B^{-1} \end{pmatrix}
  \begin{pmatrix} 1 \\  \vec{\phi} \end{pmatrix},
$$
with which
(ia) is equivalent to \eqref{a-p}.  
This completes the proof.
\end{proof}

\begin{theorem}\label{lemma:phi^-1}
If $\vec{a} \in U$, then 
  $\phi_g^{-1}(\vec{a}) =
  \begin{pmatrix} v(x) & w(x) \\ u(x) & -v(x) \end{pmatrix}$ 
  is given by  
  \begin{align}\label{a-u}
    &u(t^2) = \frac{(-1)^g}{p_g(\vec{a})^2} p(t;\vec{a}) p(-t;\vec{a}), \\
    &v(x) = \frac{1}{2} \frac{\partial}{\partial a_1} u(x), \qquad
    w(x) =  (x-2 u_{g-1}) u(x) -\frac{1}{2}
      \frac{\partial^2}{\partial a_1^2} u(x).
  \end{align}
\end{theorem}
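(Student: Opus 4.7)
The plan is to establish the three formulas in sequence: first the formula for $u(t^2)$, which is the geometric heart of the statement, and then the formulas for $v$ and $w$, which will follow essentially formally from the dynamics of the Mumford system combined with the translation-invariance result of Theorem \ref{main} (1).

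For the formula for $u$, I would restrict to the dense open subset of $U$ where Remark \ref{rem:phi} applies with $u_0 \neq 0$ and the roots $\alpha_i \in \C^*$ of the factorization $u(x) = \prod_{i=1}^g (x - \alpha_i^2)$ are pairwise distinct and satisfy $\alpha_i + \alpha_j \neq 0$ for all $i,j$. By Lemma \ref{explicit_isom}, the class of $\vec{a}$ in $J_g$ is represented in $\C[[t]]^*/\C[[t^2,t^{2g+1}]]^*$ by $h(t)^{-1}$ where $h(t) := \prod_i (1 - \alpha_i^{-1}t)$. On the other hand, by Proposition \ref{defofpq}, the polynomial $p(t;\vec a)$ also represents this class. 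Hence $p(t;\vec a)\,h(t) \in \C[[t^2,t^{2g+1}]]^*$, and since it is a polynomial of degree $\leqslant 2g$ with nonzero constant term, it must lie in $\C[t^2]$. The induced $t \mapsto -t$ symmetry forces $\{\text{roots of }p\} \cup \{\alpha_i\}$ to be stable under $t \mapsto -t$; the genericity assumption then pins down the roots of $p$ as $\{-\alpha_1,\dots,-\alpha_g\}$, so $p(t;\vec a) = p_g\prod_i(t+\alpha_i)$. Combined with $p_0 = 1$, this gives $p_g = \prod_i \alpha_i^{-1}$, and a direct multiplication yields
\begin{equation*}
p(t;\vec a)\,p(-t;\vec a) \;=\; p_g^2\prod_i(\alpha_i^2 - t^2) \;=\; (-1)^g p_g^2\,u(t^2),
\end{equation*}
which is the claimed identity. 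Since both sides are rational in $\vec a$ on $U$, the equality extends from this dense locus to all of $U$ by Zariski density.

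The formulas for $v$ and $w$ are then squeezed out of \eqref{time-evol} by extracting the $z^{g-1}$-coefficient, since by Theorem \ref{main} (1) the vector field $\X_1$ (the coefficient of $z^{g-1}$ in $D(z)$) corresponds to $\partial/\partial a_1$ on $\C^g$. Writing $u(x)v(z) - v(x)u(z) = (x-z)Q(x,z)$ with $\deg_z Q = g-1$, a top-order computation gives $Q_{g-1}(x) = v(x)$, so $\X_1 u(x) = 2v(x)$, i.e.\ $v(x) = \tfrac12 \partial_{a_1} u(x)$. Similarly, writing $w(x)u(z) - u(x)w(z) = (x-z)\tilde Q(x,z)$ of $z$-degree $g$, the top two coefficients give $\tilde Q_g(x) = u(x)$ and $\tilde Q_{g-1}(x) = u(x)(x+w_g) - w(x)$, whence
\begin{equation*}
\X_1 v(x) \;=\; \tilde Q_{g-1}(x) - u(x) u_{g-1} \;=\; u(x)\,(x + w_g - u_{g-1}) - w(x).
\end{equation*}
Comparing the coefficients of $x^{2g}$ in the spectral relation $u(x)w(x) + v(x)^2 = x^{2g+1}$ yields $w_g = -u_{g-1}$. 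Substituting this and using $\X_1 v = \tfrac12 \partial_{a_1}^2 u$ (from the previous step) delivers the claimed formula for $w$.

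The main obstacle is purely in the first step: one has to verify carefully that the symmetry argument identifying the roots of $p(t;\vec a)$ with $\{-\alpha_i\}$ really requires only genericity of the $\alpha_i$ (not, e.g.\ assumptions that might fail on a larger stratum), and that the resulting identity is then legitimately propagated to all of $U$ by a Zariski closure argument. The steps for $v$ and $w$ are, by contrast, short formal consequences of Mumford's evolution equations \eqref{time-evol} and should go through without difficulty.
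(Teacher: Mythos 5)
Your proposal is correct and takes essentially the same route as the paper's proof: the formula for $u$ rests on Proposition \ref{defofpq} combined with the divisor description of $\phi_g$ (Remark \ref{rem:phi}), and $v,w$ are extracted from the $z^{g-1}$-coefficient of \eqref{time-evol} (the paper's equations \eqref{d_g-uvw}) together with $\X_1=\partial/\partial a_1$ from Theorem \ref{main}~(1) and $w_g=-u_{g-1}$; you simply spell out the root-matching via the $t\mapsto -t$ symmetry and a density argument where the paper is terse. The only minor difference is that the paper obtains $p_g\neq 0$ on all of $U$ directly from Lemma \ref{lemma:p-tau}~(1) ($\tau_g=p_g\det B$), whereas in your version this is recovered from genericity plus the extension step.
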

For a proof, we need a few lemmas:
\begin{lemma}\label{lemma:d-chi}
  For $k=1,\cdots,g$.
  we have
  \begin{align*}
  \mathrm{(1)}~ \frac{\partial}{\partial a_{k}} \chi_j = \chi_{j-2k+1},
  \qquad 
  \mathrm{(2)}~ \frac{\partial}{\partial a_{k}} \chi_j 
                = \bigl(\frac{\partial}{\partial a_1}\bigr)^{2k-1} \chi_j.
  \end{align*}
\end{lemma}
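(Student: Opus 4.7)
The plan is to exploit the generating function definition
\[
\exp\Bigl(\sum_{i=1}^{\infty} a_i t^{2i-1}\Bigr) = \sum_{n=0}^{\infty} \chi_n(\vec{a})\, t^n
\]
directly. For part (1), I would differentiate both sides of this identity with respect to $a_k$. On the left-hand side, since only the term $a_k t^{2k-1}$ in the exponent depends on $a_k$, the derivative produces the factor $t^{2k-1}$ times the original exponential, which equals $\sum_{m=0}^{\infty} \chi_m t^{m+2k-1}$. On the right-hand side, term-by-term differentiation gives $\sum_{n=0}^{\infty}(\partial \chi_n/\partial a_k)\, t^n$. Shifting the index $n = m+2k-1$ and comparing coefficients of $t^n$ then yields $\partial \chi_n/\partial a_k = \chi_{n-2k+1}$, where the convention $\chi_m = 0$ for $m < 0$ takes care of the small-$n$ range automatically.

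For part (2), I would simply iterate (1) with $k=1$. The case $k=1$ of (1) reads $\partial \chi_j / \partial a_1 = \chi_{j-1}$, so applying $\partial/\partial a_1$ a total of $2k-1$ times gives $(\partial/\partial a_1)^{2k-1}\chi_j = \chi_{j-(2k-1)} = \chi_{j-2k+1}$, which by (1) equals $\partial \chi_j/\partial a_k$.

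There is no real obstacle here; the only points to be careful about are the interchange of differentiation and the formal sum (legitimate because $\chi_n$ is polynomial in finitely many $a_i$'s and the sum is a formal power series in $t$) and the bookkeeping for small $n$, both of which are handled by the convention $\chi_m = 0$ for $m<0$.
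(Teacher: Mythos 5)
Your proof is correct and follows essentially the same route as the paper: differentiating the generating function \eqref{chi} with respect to $a_k$ and comparing coefficients of $t^n$ (with the convention $\chi_m=0$ for $m<0$) gives (1), and (2) follows by iterating the $k=1$ case. No issues.
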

\begin{proof}
By operating with $\frac{\partial}{\partial a_{k}}$ on \eqref{chi}, we obtain
$$ 
  t^{2k-1} \exp(\sum_{i=1}^{g} a_i t^{2i-1}) 
= \sum_{j=0}^{\infty} (\frac{\partial}{\partial a_{k}} \chi_j) t^j.
$$
Thus we get 
$\sum_{j=0}^{\infty} 
(\frac{\partial}{\partial a_{k}} \chi_j - \chi_{j-2k+1}) t^j = 0$, 
and (1) follows. 
(2) follows from (1).
\end{proof}

\begin{lemma}\label{lemma:p-tau}
  For $\vec{a} \in U$, we have the following:
  \begin{align*}
    \mathrm{(1)} ~\tau_g(\vec{a})  = p_g \det B,
    \qquad                 
    \mathrm{(2)} ~\tau'_g(\vec{a}) = p_{g-1} \det B, 
    \qquad 
    \mathrm{(3)} ~\tau''_g(\vec{a}) = 2 p_{g-2} \det B.
  \end{align*}
\end{lemma}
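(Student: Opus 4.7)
My plan is to treat (1), (2), and (3) via a unified application of the Schur-complement block-determinant formula. Throughout, write $R_r$ for the $r$-th row of the matrix $\bar X$ introduced above (so $(R_r)_c = \chi_{r-2c+1}$). For any $g$-vector $R$, define
\[
M(R) := \begin{pmatrix} R \\ \vec\phi & B \end{pmatrix}.
\]
Since $\det B \neq 0$ on $U$, the Schur-complement formula gives
\[
\det M(R) = \det B \cdot \sum_{j=0}^{g-1} R_{j+1}\, b_j,
\]
using $(b_1,\dots,b_{g-1})^T = -B^{-1}\vec\phi$ and $b_0 = 1$. Specialising to $R = R_{g+1-k}$ yields $\det M(R) = p_{g-k}\det B$. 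Since $\bar X_g = M(R_{g+1})$, this already proves (1).

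For (2), I will differentiate in $a_1$. By Lemma \ref{lemma:d-chi}(1), differentiation shifts $R_r$ to $R_{r-1}$, so by multilinearity $\tau_g'$ is the sum over $i$ of $\det(\bar X_g$ with row $i$ replaced by its derivative$)$. For $i\geq 2$, the derivative equals row $i-1$ of $\bar X_g$, producing a repeated row; only the $i=1$ term survives, giving $M(R_g)$ whose determinant is $p_{g-1}\det B$.

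For (3), I differentiate the matrix $\tilde X_g := M(R_g)$ (whose determinant is $\tau_g'$) once more. Applying the same shift-and-repetition argument to $\tilde X_g$ (whose rows are $R_g, R_{g+2}, R_{g+3}, \dots, R_{2g}$), only two contributions survive. Replacing row 1 (which is $R_g$) by its derivative $R_{g-1}$ yields $M_1'' = M(R_{g-1})$, whose determinant is $p_{g-2}\det B$ by the block formula. Replacing row 2 (which is $R_{g+2}$) by its derivative $R_{g+1}$ yields a matrix $M_2''$ whose top two rows are $R_g, R_{g+1}$ and whose bottom $g-2$ rows coincide with $(\vec\phi\mid B)$ with its top row removed. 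Hence $\tau_g'' = p_{g-2}\det B + \det M_2''$.

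The hard part will be establishing $\det M_2'' = p_{g-2}\det B$, since the top two rows of $M_2''$ are no longer isolated in a block-formula-friendly manner. My plan is to invoke the three-term Plücker relation for the $(g+2)\times g$ matrix $\tilde N$ whose successive rows are $R_{g-1}, R_g, R_{g+1}, R_{g+2}, \dots, R_{2g}$. Writing $Q(i,j)$ for the $g\times g$ minor of $\tilde N$ obtained by deleting rows $i, j$ (with $i,j \in \{1,2,3,4\}$ labelling the top four rows), one has $Q(1,2) = \tau_g$, $Q(1,3) = \tau_g'$, $Q(2,3) = p_{g-2}\det B$, and $Q(1,4) = \det M_2''$; the auxiliary minors $Q(2,4)$ and $Q(3,4)$ can be evaluated by an analogous Schur-complement argument applied with the modified $(g-1)\times(g-1)$ block $\hat B$ obtained from $B$ by replacing its top row with $(\chi_{g-2}, \chi_{g-4}, \dots)$. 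The Plücker identity
\[
Q(1,2)Q(3,4) - Q(1,3)Q(2,4) + Q(1,4)Q(2,3) = 0,
\]
combined with (1), (2), and the explicit formulas for $Q(2,4)$ and $Q(3,4)$, then forces $\det M_2'' = p_{g-2}\det B$, completing the proof of (3).
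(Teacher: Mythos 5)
Your reduction is sound as far as it goes, and for (1) and (2) it is essentially the paper's argument in cleaner clothing: the Schur-complement identity $\det M(R)=\det B\,(R\cdot\vec b)$ is exactly what the paper establishes by hand via the cofactor identity \eqref{X-pB}, and your multilinearity/row-shift computation of $\tau_g'$ and $\tau_g''$ (only the top rows survive) reproduces the paper's displays. You also correctly identify the crux of (3): after the differentiation one is left with $\tau_g''=\det M(R_{g-1})+\det M_2''$ and must show $\det M_2''=p_{g-2}\det B$, i.e.\ in your notation $Q(1,4)=Q(2,3)$.

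The gap is in how you propose to close this. The three-term Pl\"ucker relation $Q(1,2)Q(3,4)-Q(1,3)Q(2,4)+Q(1,4)Q(2,3)=0$ is valid, and with $Q(1,2)=\tau_g=p_g\det B$, $Q(1,3)=\tau_g'=p_{g-1}\det B$, $Q(2,3)=p_{g-2}\det B$ it shows that the desired equality $Q(1,4)=p_{g-2}\det B$ is \emph{equivalent} to the auxiliary identity $p_{g-1}Q(2,4)-p_gQ(3,4)=p_{g-2}^2\det B$. You never prove this identity; you only assert that $Q(2,4)$ and $Q(3,4)$ "can be evaluated" and that the formulas "force" the conclusion. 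That is precisely where the content lies, so as written the argument is circular modulo Pl\"ucker. Moreover, the proposed evaluation does not fit $Q(3,4)$: its matrix has \emph{two} rows ($R_{g-1},R_g$) sitting above a $(g-2)\times g$ block, so a single-row Schur complement against a $(g-1)\times(g-1)$ matrix $\hat B$ is not available, and even for $Q(2,4)$ one would need $\det\hat B\neq 0$, which is not guaranteed on $U$ (fixable by a Zariski-density argument, but you would have to say so). I checked the auxiliary identity by hand for $g=2,3$, so your route is not doomed, but completing it looks at least as hard as the original claim. The paper avoids all of this by proving $\det M(R_{g-1})=\det M_2''$ directly: it introduces the derivation $\partial$ with $\partial X_j=X_{j-2}$, $\partial X_0=\partial X_1=0$, applies it to the auxiliary determinant $T$ (rows $R_g,\dots,R_{2g}$ with a padding column), and computes $\partial T$ once by columns (getting $0$) and once by rows (getting $Q_1-Q_2$); adopting that lemma, or supplying an actual proof of $p_{g-1}Q(2,4)-p_gQ(3,4)=p_{g-2}^2\det B$, is what your proof still needs.
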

\begin{proof}
(1) Since $\vec{a} \in U$, we can write $B(\vec{a})^{-1} = \frac{1}{\det B(\vec{a})} \bar{B}(\vec{a})$ where
$\bar{B}(\vec{a})$ is the matrix of cofactors of $B(\vec{a})$.  We write $B_k$ for the $g-1$ by $g-2$ submatrix of
$B(\vec{a})$ obtained by removing the $k$-th column of $B(\vec{a})$.  We have
\begin{align*}
  &p_g \det B = \chi_g \det B - \sum_{k=1}^{[\frac{g}{2}]}\chi_{g-2k} 
                \sum_{j=1}^{g-1} \bar{B}_{k,j} \phi_j,\\
  &\tau_g = \det \bar{X}_g  
  = \chi_g \det B + \sum_{k=1}^{[\frac{g}{2}]} (-1)^k \chi_{g-2k} 
  \det \begin{pmatrix}\vec{\phi} B_k \end{pmatrix}.
\end{align*}
Now the claim follows from the following fact
\begin{align}\label{X-pB}
  \det \begin{pmatrix} \vec{\phi} B_k \end{pmatrix}=(-1)^{k-1} \sum_{j=1}^{g-1} \bar{B}_{k,j} \phi_j.
\end{align}
(2)
  Using Lemma \ref{lemma:d-chi}, we get
  \begin{align*}
    \tau_g'
    = \det \begin{pmatrix} 
             \chi_{g-1} & \chi_{g-3} & \chi_{g-5} & \cdots \\
             \vec{\phi} & & B
           \end{pmatrix}.
  \end{align*}
  On the other hand, we have
  \begin{align*}
    p_{g-1} \det B = \chi_{g-1} \det B 
                 - \sum_{k=1}^{[\frac{g-1}{2}]} 
                   \chi_{g-1-2k} \sum_{j=1}^{g-1} \bar{B}_{k,j} \phi_j,
  \end{align*}
  which coincides with $\tau_g'$ by \eqref{X-pB}.
  
\noindent  (3)  
  Using Lemma \ref{lemma:d-chi}, we have 
  \begin{align*}
    \tau_g''
    &= 
       \det \begin{pmatrix} 
             \chi_{g-2} & \chi_{g-4} & \chi_{g-6} & \cdots \\
             \chi_{g+1} & \chi_{g-1} & \chi_{g-3} & \cdots \\
             \chi_{g+2} & \chi_{g} & \chi_{g-2} & \cdots \\
             \vdots \\
             \chi_{2g-1} & \chi_{2g-3} & \chi_{2g-5} & \cdots \\ 
            \end{pmatrix}
           + 
       \det \begin{pmatrix} 
             \chi_{g-1} & \chi_{g-3} & \chi_{g-5} & \cdots \\
             \chi_{g} & \chi_{g-2} & \chi_{g-4} & \cdots \\
             \chi_{g+2} & \chi_{g} & \chi_{g-2} & \cdots \\
             \vdots \\
             \chi_{2g-1} & \chi_{2g-3} & \chi_{2g-5} & \cdots \\ 
            \end{pmatrix}.
\end{align*}
By using \eqref{X-pB}, 
the first term in r.h.s. turns out to be $p_{g-2} \det B$.  
On the other hand, it follows from 
the following lemma that the first and second terms coincide.
This completes the proof.
\end{proof}
\begin{lemma}
Let $X_0, \dots, X_{2g-1}$ be independent variables.
We define two elements in the polynomial ring $\C[X_0, \dots, X_{2g-1}]$:
\[
     Q_1:=  \det \begin{pmatrix} 
             X_{g-2} & X_{g-4} & X_{g-6} & \cdots \\
             X_{g+1} & X_{g-1} & X_{g-3} & \cdots \\
             X_{g+2} & X_{g} & X_{g-2} & \cdots \\
             \vdots \\
             X_{2g-1} & X_{2g-3} & X_{2g-5} & \cdots \\ 
            \end{pmatrix},
~~~
     Q_2:=  \det \begin{pmatrix} 
             X_{g-1} & X_{g-3} & X_{g-5} & \cdots \\
             X_{g} & X_{g-2} & X_{g-4} & \cdots \\
             X_{g+2} & X_{g} & X_{g-2} & \cdots \\
             \vdots \\
             X_{2g-1} & X_{2g-3} & X_{2g-5} & \cdots \\ 
            \end{pmatrix}.
\]
Then we have $Q_1=Q_2$.
\end{lemma}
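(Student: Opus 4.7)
The two $g \times g$ determinants $Q_1$ and $Q_2$ agree on their bottom $g-2$ rows and differ only in the top two: $Q_1$ uses the rows $R_{g-1}, R_{g+2}$ while $Q_2$ uses $R_g, R_{g+1}$, where $R_k$ denotes the row vector whose $j$th entry is $X_{k-2j+1}$ (with the convention $X_m = 0$ for $m < 0$). The combinatorial core of the identity is that the two row-index pairs have the same sum, $(g-1)+(g+2) = g+(g+1) = 2g+1$.

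My plan is to expand both determinants by Laplace expansion along the top two rows, yielding
\[ Q_1 - Q_2 = \sum_{1 \leq c_1 < c_2 \leq g} (-1)^{c_1+c_2-1}\, [\mu_1(c_1,c_2) - \mu_2(c_1,c_2)]\, N(c_1,c_2), \]
where $N(c_1,c_2)$ is the common $(g-2)\times(g-2)$ cofactor from the bottom rows and $\mu_a(c_1,c_2)$ is the $2\times 2$ minor of the top two rows of $Q_a$ in columns $c_1, c_2$. Setting $p = g-2c_1$ and $q = g-2c_2$, one computes
\[ \mu_1(c_1,c_2) - \mu_2(c_1,c_2) = (X_p X_{q+3} - X_{p+1} X_{q+2}) - (X_q X_{p+3} - X_{q+1} X_{p+2}), \]
which is manifestly antisymmetric in $(p,q)$ but nonzero term by term for generic $X$. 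Hence the vanishing of $Q_1-Q_2$ must be global, arising from cancellation across column pairs rather than within each summand.

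The hard part will be establishing this global vanishing. One route is to interpret $Q_1$ and $Q_2$ as Pl\"ucker coordinates of the $(g+2)\times g$ matrix with rows $R_{g-1}, R_g, R_{g+1}, R_{g+2}, R_{g+3}, \ldots, R_{2g}$: here $Q_1$ and $Q_2$ are the maximal minors obtained by deleting the row pairs $\{2,3\}$ and $\{1,4\}$ respectively, and the classical Pl\"ucker relation yields $Q_1 Q_2 = M_{\hat{1}\hat{3}} M_{\hat{2}\hat{4}} - M_{\hat{1}\hat{2}} M_{\hat{3}\hat{4}}$, so that one must upgrade this quadratic relation to the linear identity $Q_1 = Q_2$ by exploiting the consecutive-index structure of $R_{g-1}, R_g, R_{g+1}, R_{g+2}$. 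A second route is induction on $g$: the base case $g = 2$ gives $Q_1 = Q_2 = X_0 X_1$ by inspection, and for $g \geq 3$ one exploits the zero pattern created by $X_m = 0$ for $m < 0$ --- the top row of $Q_1$ has trailing zeros and the rightmost column of $Q_2$ has leading zeros --- to carry out matching cofactor expansions whose reduced $(g-1)\times(g-1)$ minors satisfy the identity by the inductive hypothesis. Either route requires careful bookkeeping of signs and indices to match terms.
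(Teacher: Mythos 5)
Your proposal is a plan rather than a proof: the one route you actually carry out (Laplace expansion along the two differing rows) you yourself show does not close, since the difference of the $2\times 2$ top minors is nonzero summand by summand, and the "global cancellation" that remains is exactly the content of the lemma --- it is never established. The two fallback routes are only sketched and neither is in working order as stated. The Pl\"ucker relation you quote is quadratic in the maximal minors of the $(g+2)\times g$ matrix with rows $R_{g-1},\dots,R_{2g}$, and a quadratic identity cannot by itself be "upgraded" to the linear identity $Q_1=Q_2$; you would need an additional linear syzygy among those minors, which is precisely what is missing and what the consecutive-index structure has to produce --- you give no mechanism for extracting it. The induction sketch is likewise unsubstantiated: after expanding along the top row of $Q_1$ (which has roughly $\lceil (g-1)/2\rceil$ nonzero entries, not one or two) the resulting $(g-1)\times(g-1)$ cofactors are not visibly of the form $Q_1$ or $Q_2$ for $g-1$, so "the reduced minors satisfy the identity by the inductive hypothesis" is an unproved claim, and the matching of these cofactors against those of $Q_2$ is exactly the bookkeeping you defer.

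For comparison, the paper closes this gap with a short algebraic device that supplies the global cancellation in one stroke: define the derivation $\partial$ on $\C[X_0,\dots,X_{2g-1}]$ by $\partial X_j=X_{j-2}$ (and $\partial X_0=\partial X_1=0$), and let $T$ be the $g\times g$ determinant with rows $R_g,R_{g+1},\dots,R_{2g-1}$ (in your row notation, the matrix whose last column ends in $X_0,X_1$). Differentiating $T$ column by column gives $0$, because $\partial$ turns each column into the next one (two equal columns) and kills the last; differentiating row by row gives $Q_1-Q_2$, because $\partial$ turns row $i$ into row $i-2$ (two equal rows) except for the top two rows, which produce $Q_1$ and $-Q_2$. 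Hence $Q_1-Q_2=\partial T=0$. If you want to salvage your own approach, this derivation identity is the linear syzygy your Pl\"ucker route is missing; as written, however, your argument has a genuine gap at its central step.
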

\begin{proof}
We define a derivation $\partial$ 
on $\C[X_0, \dots, X_{2g-1}]$ by
$\partial X_j = X_{j-2}$ for $2 \leqslant j \leqslant 2g-1$
and $\pt X_0 = \pt X_1 = 0$.
We define 
\[ T :=
       \det \begin{pmatrix} 
             X_{g} & X_{g-2} &  \cdots & 0 \\
             X_{g+1} & X_{g-1} & \cdots & 0 \\
             X_{g+2} & X_{g} &  & \vdots \\
             \vdots & \vdots & & X_0 \\
             X_{2g-1} & X_{2g-3}  & \cdots &X_1\\ 
            \end{pmatrix}.
\]
We calculate $\partial T$ in two ways.
By differentiating columns, we see that
$\pt T = 0$ since $\pt X_0 = \pt X_1 = 0$.
By differentiating rows, we see that
$\pt T = Q_1-Q_2$.
This completes the proof.
\end{proof}

\noindent
{\it Proof of Theorem \ref{lemma:phi^-1}.}
{}From Lemma \ref{lemma:p-tau} (1) 
we have $p_g \neq 0$ on $U$, 
thus $p(t;\vec{a})$ is written as
$p(t;\vec{a}) = \prod_{j=1}^g (1- \frac{t}{\alpha_j})$
so that $p_g =
(-1)^g \prod_{j=1}^g \frac{1}{\alpha_j}$.  
Proposition \ref{defofpq} shows that
$u(x) = \prod_{j=1}^g (x-\alpha_j^2)$ 
(cf. Remark \ref{rem:phi}).   
Thus we have
  $$
    u(t^2) = \prod_{j=1}^g (t-\alpha_j)(t+\alpha_j)      
           = (\prod_{j=1}^g - \alpha_j^2) p(t;\vec{a}) p(-t;\vec{a}),
  $$
and \eqref{a-u} follows.  
The action of $\X_g$ \eqref{time-evol} on $M_g$ 
is written as follows:
\begin{align}\label{d_g-uvw}
  \begin{split}
    &\X_1 u(x) = 2 v(x), \\
    &\X_1 v(x) = - w(x) + (x-u_{g-1}+w_g) u(x), \\
    &\X_1 w(x) = 2 (x-u_{g-1}+w_g) v(x).
    \end{split}
\end{align}
To obtain $v(x)$ and $w(x)$, we use the first two equations, the relation $\X_1 =\frac{\partial}{\partial a_1}$
which comes from Theorem \ref{main} (1), and the fact that $w_g=-u_{g-1}$ on $M_g(0)$, as follows from
$u(x)w(x)+v(x)^2=x^{2g+1}$.
\qed

\subsection{Algorithm}\label{algorithm}
We present an explicit algorithm to compute 
a rational solution to the Mumford system.
This can be considered as a degenerate version of
\cite{McK-vM} (see also \cite[\S 10]{Mumford-Book}),
where a solution is given in terms of 
the hyperelliptic $\wp$-function.
The function $\rho_g$ defined in \eqref{def:rho} below
corresponds to a degenerate version of the hyperelliptic $\wp$-function.

\begin{definition}
We define a family of polynomials
$U_0, \dots, U_{g-1}, V_0, \dots, V_{g-1}, 
 W_0, \dots, W_{g} \in \C[T_0, \dots, T_{2g}]$
as follows.
We set
\[ U_{g-1}=T_0, \quad
   V_{g-1}=\frac{1}{2}T_1, \quad
   W_{g}=-T_0.
\]
Assume we have defined $U_{g-i}, V_{g-i}, W_{g-i+1}$
for $i = 1, \dots, k$.
Then we define
\begin{align*}
U_{g-k-1}
&= \frac{1}{4} \ddot{U}_{g-k}+U_{g-1}U_{g-k}
   -\frac{1}{2} \big(\sum_{j=g-k}^{g-1} U_j W_{2g-j-k}   + \sum_{j=g+1-k}^{g-1} V_j V_{2g-j-k} \big),\\
V_{g-k-1} &=  \frac{1}{2} \dot{U}_{g-k-1},\\
W_{g-k}\, \,
&= -\frac{1}{4} \ddot{U}_{g-k}-U_{g-1}U_{g-k}
 -\frac{1}{2} \big(\sum_{j=g-k}^{g-1} U_j W_{2g-j-k} +\sum_{j=g+1-k}^{g-1} V_j V_{2g-j-k} \big).
\end{align*}
Here $F \mapsto \dot{F}$ is the derivation
on $\C[T_0, \dots, T_{2g}]$
defined by
$\dot{T}_i=T_{i+1}$ for $i=0, 1, \dots, 2g-1$,
and by $\dot{T}_{2g}=0$.
The first examples of $U_k$ are given by
 \begin{align*}
    &U_{g-1} = T_0, \\
    &U_{g-2} = \frac{1}{4} T_2 + \frac{3}{2} T_0^2, \\
    &U_{g-3} = \frac{1}{16} T_4 + \frac{5}{8} T_1^2 
             + \frac{5}{4} T_0 T_2 + \frac{5}{2} T_3.
  \end{align*}
\end{definition}

\begin{proposition}\label{prop:uvw-tau}
  Let $\rho_g = \rho_g(\vec{a})$ be the rational function in $\C[a_1, \dots, a_g, \frac{1}{\tau_g}]$ given by
\begin{equation}\label{def:rho}
\rho_g(\vec{a}) = \frac{\partial^2}{\partial a_{1}^2}\log \tau_g(\vec{a}).
\end{equation}
Then, the functions
\[ u_k := U_k(\rho_g, \rho_g', \dots, \rho_g^{(2g)}),~~
   v_k := V_k(\rho_g, \rho_g', \dots, \rho_g^{(2g)}),~~
   w_k := W_k(\rho_g, \rho_g', \dots, \rho_g^{(2g)})
\]
give a rational solution for the genus $g$ Mumford system.
\end{proposition}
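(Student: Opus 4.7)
The plan is to prove the assertion by downward induction on $k$, using the explicit formulas for $\phi_g^{-1}$ from Theorem \ref{lemma:phi^-1} together with the defining relation $u(x)w(x)+v(x)^2 = x^{2g+1}$ of $M_g(0)$. Since Theorem \ref{lemma:phi^-1} gives $v(x) = u(x)'/2$ and $w(x) = -u(x)''/2 + (x-2u_{g-1})u(x)$ (where $'$ denotes $\pt/\pt a_1$), the identity $V_{g-k-1} = \dot U_{g-k-1}/2$ and the prescribed form of $W_{g-k}$ in the definition reduce the whole problem to showing $u_k = U_k(\rho_g, \rho_g', \ldots, \rho_g^{(2g)})$ for every $k$; the statements for $v_k$ and $w_k$ then follow automatically.

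For the base case I would extract the coefficient of $t^{2g-2}$ from $u(t^2) = (-1)^g p_g^{-2}\, p(t;\vec a)\, p(-t;\vec a)$, obtaining $u_{g-1} = 2 p_{g-2}/p_g - (p_{g-1}/p_g)^2$. Lemma \ref{lemma:p-tau} identifies $p_{g-1}/p_g = \tau_g'/\tau_g$ and $2 p_{g-2}/p_g = \tau_g''/\tau_g$, so $u_{g-1} = \tau_g''/\tau_g - (\tau_g'/\tau_g)^2 = \pt^2 \log\tau_g/\pt a_1^2 = \rho_g$, matching $U_{g-1} = T_0$; the base identities $v_{g-1} = \rho_g'/2 = V_{g-1}$ and $w_g = -\rho_g = W_g$ then follow immediately. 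For the inductive step, assume the formulas for $u_j, v_j, w_j$ hold for all larger indices. Extracting the coefficient of $x^{2g-k}$ from $u(x)w(x)+v(x)^2 = x^{2g+1}$ gives
\[ u_{g-k-1} + w_{g-k} + \sum_{j=g-k}^{g-1} u_j w_{2g-j-k} + \sum_{j=g-k+1}^{g-1} v_j v_{2g-j-k} = 0. \]
Substituting $w_{g-k} = -u_{g-k}''/2 + u_{g-k-1} - 2 u_{g-1} u_{g-k}$ (the coefficient of $x^{g-k}$ in $w(x) = -u(x)''/2 + (x-2u_{g-1})u(x)$) and solving for $u_{g-k-1}$ yields precisely the recursion defining $U_{g-k-1}$; solving instead for $w_{g-k}$ yields the recursion defining $W_{g-k}$, closing the induction.

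The main obstacle is purely combinatorial: one must carefully verify that the summation ranges emerging from the coefficient extraction of $u(x)w(x)+v(x)^2 = x^{2g+1}$ match exactly those prescribed by the definitions of $U_{g-k-1}$ and $W_{g-k}$, and that the recursion extends consistently to the boundary case $k=g$, where only $W_0$ is produced and no $U_{-1}$ is needed. Apart from this careful index tracking, no additional input beyond Theorem \ref{lemma:phi^-1} and Lemma \ref{lemma:p-tau} is required.
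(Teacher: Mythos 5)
Your proposal is correct and follows essentially the same route as the paper: the base case $u_{g-1}=\rho_g$ via Theorem \ref{lemma:phi^-1} and Lemma \ref{lemma:p-tau}, then the recursion obtained by combining the relations $v(x)=\tfrac12 u(x)'$, $w(x)=(x-2u_{g-1})u(x)-\tfrac12 u(x)''$ (equivalently the flow equations \eqref{d_g-uvw}) with the coefficient of $x^{2g-k}$ in $u(x)w(x)+v(x)^2=x^{2g+1}$. The only cosmetic difference is that the paper quotes \eqref{d_g-uvw} directly and also remarks that the formulas, a priori valid on the open set $U$, extend to all of $J_g\setminus\Theta_g$ since $\rho_g$ has poles only along $\Theta_g$.
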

\begin{proof}
By using Theorem \ref{lemma:phi^-1}, when $\vec{a} \in U$, 
$u_{g-1}$ is written in terms of $p_j$ as
\begin{align*}
  u_{g-1} = \frac{2 p_{g-2} p_g - p_{g-1}^2}{p_g^2}.
\end{align*}
{}From Lemma \ref{lemma:p-tau}, this turns out to be
$$
  u_{g-1} = \frac{\tau_g'' \tau_g - (\tau_g')^2}{\tau_g^2} = \rho_g.
$$
Since $\rho_g$ has poles only on $\Theta_g$,
the domain of the solution of $u_{g-1}$ is extended 
from the open subset $U$ of $J_g \setminus \Theta_g$ to 
$J_g \setminus \Theta_g$. 

The first two equations of \eqref{d_g-uvw} yield 
\begin{align*}
  &v_{g-k} = \frac{1}{2} u_{g-k}', 
\\
  &\frac{1}{2} u_{g-k}'' = -w_{g-k}+u_{g-k-1}+(w_g-u_{g-1})u_{g-k}.
\end{align*}
If we look at the coefficient of
$x^{2g-k}$ in the equation $u(x)w(x) + v(x)^2 = x^{2g+1}$, we get 
\begin{align*}
  \sum_{j=g-k}^{g-1} u_j w_{2g-j-k} + u_{g-k-1} + w_{g-k} 
  + \sum_{j=g+1-k}^{g-1} v_j v_{2g-j-k} = 0.
\end{align*}
The proposition follows from these three equations.
\end{proof}

\begin{example}(Rational solution)
\\
(i) $g=2$ case:
$$
  \tau_2(\vec{a}) = \frac{a_1^3}{3} - a_2, \qquad 
  \rho_2(\vec{a}) = \frac{-3 a_1 (a_1^3 + 6 a_2)}{(a_1^3 - 3 a_2)^2}.
$$
(ii) $g=3$ case:
\begin{align*}
  &\tau_3(\vec{a}) = \frac{a_1^6}{45} - \frac{a_1^3 a_2}{3} - a_2^2 + a_1 a_3, \\
  &\rho_3(\vec{a}) = \frac{-3 (2 a_1^{10} + 675 a_1^4 a_2^2 - 1350 a_1 a_2^3 - 270 a_1^5 a_3 + 675 a_3^2)}
            {(a_1^6-15 a_1^3 a_2 - 45 a_2^2 + 45 a_1 a_3)^2}.
\end{align*}  
(iii) $g=4$ case:
\begin{align*}
  &\tau_4(\vec{a}) = \frac{a_1^{10}}{4725} - \frac{a_1^7 a_2}{105}
            - a_1 a_2^3 + \frac{a_1^5 a_3}{15} + a_1^2 a_2 a_3 - a_3^2
            - \frac{a_1^3 a_4}{3} + a_2 a_4.
\end{align*}  
\end{example}

\section{Relation to the KdV hierarchy}

As we already pointed out in the introduction, the Mumford systems and the KdV hierarchy are intimately related.
We briefly examine the relationship, with focus on the rational solutions.  Recall that the KdV hierarchy is
defined by the family of compatible Lax equations (see \cite{Jimbo-Miwa00, sw} and references therein):
$$
  \frac{\partial}{\partial x_{2i-1}} \mathcal{L} 
  =
  [\mathcal{L}^{i-\frac{1}{2}}_+ ~,~ \mathcal{L} ],
  ~~ \text{ for } i=1,2,3,\cdots.
$$
Here $\mathcal L$ is a differential operator of the form $\partial_x^2+f$ where $f$ is a function of $\vec{x} =
(x,x_1,x_3, \cdots) \in \C^\infty$ and $\pt_{x} f = \frac{\pt f}{\pt x} + f \cdot \pt_{x}$.  The square root
$\mathcal L^{\frac{1}{2}}$ is computed in the ring of formal pseudo-differential operators; the index $+$ in
$\mathcal{L}^{i-\frac{1}{2}}_+$ means that we take the differential part of $\mathcal{L}^{i-\frac{1}{2}}$.  The
first three equations ($i=1,2,3$) are given as follows:
\begin{align*}
  &\frac{\partial f}{\partial x_1} = \frac{\partial f}{\partial x},\\
  &\frac{\partial f}{\partial x_3} = \frac{1}{4} \frac{\partial^3 f}{\partial x^3} +\frac{3}{2} f \cdot \frac{\partial f}{\partial x},\\
  &\frac{\partial f}{\partial x_5} = \frac{1}{16}\frac{\partial^5f}{\partial x^5} +\frac{5}{8} f \cdot \frac{\partial^3f}{\partial x^3}
      + \frac{5}{4} \, \frac{\partial f}{\partial x}\cdot \frac{\partial^2f}{\partial x^2} +\frac{15}{8} f^2 \cdot \frac{\partial f}{\partial x}.
\end{align*}  
In the sequel, we identify $x$ with $x_1$, as suggested by the first equation of the above list and we consider the
rational solutions to the KdV hierarchy.  According to \cite{AMM77}, there is for every positive integer $g$ an
essentially unique solution, depending on $g$ parameters:
  \begin{enumerate}
  \item
  Suppose $f=f(x_1, x_3, \dots)$ be a non-zero rational function satisfying the KdV hierarchy.  Then there exist $g
  \in \Z_{>0}$ and $c_1 \in \C$ such that $f(x_1, 0, 0, \dots) = -\frac{g(g+1)}{(x_1-c_1)^2}$.  Moreover, $f$
  depends on the $g$ variables $x_1,x_3, \dots, x_{2g-1}$, and is independent of the other variables $x_{2i-1}$.
  In this case, we call $f$ a genus $g$ rational solution.
\item
  If $f$ and $\tilde{f}$ are genus $g$ rational solutions,
  then there exist $c_1, c_3, \dots, c_{2g-1} \in \C$
  such that 
$\tilde{f}(x_1 - c_1, \dots, x_{2g-1} - c_{2g-1}) 
= f(x_1, \dots, x_{2g-1})$.
\end{enumerate}

An explicit formula for these rational solutions is given in the following proposition, which is known in different
forms, as indicated in the proof below. The upshot, in connection with our result, is that the rational solutions
to the KdV hierarchy of genus $0,1,\dots,g$ fill up a very specific invariant manifold of the genus $g$ Mumford
system and form, combined, the complement of the completed theta divisor of the compactified Jacobian of the
singular curve $y^2=x^{2g+1}$. 

\begin{proposition}\label{prop:KdV}
  The function $f=2 \rho_g(\vec{a})$, defined in \eqref{def:rho}, gives a rational solution for the KdV hierarchy upon substituting
  \begin{align}\label{a-x}
    &a_i = x_{2i-1}, ~ \text{ for $i=1, \cdots, g$}.
  \end{align}
  This solution is non-trivial for the first $g-1$ vector fields $\frac{\partial}{\partial x_3},
  \frac{\partial}{\partial x_5}, \dots , \frac{\partial}{\partial x_{2g-1}}$ of the hierarchy, and trivial for the
  higher ones.
\end{proposition}
\begin{proof}
We sketch three different approaches to this result. 

(1) The KdV hierarchy is known to have Wronskian solutions, constructed as follows (See \cite{Hirota92} for details):
Fix $g\in \Z_{>0}$, and consider $g$ functions $f_1,\dots,f_g$ of 
$\vec{x}=(x_1,x_3,x_5,\cdots) \in \C^\infty$; for
$k\in\Z_{>0}$ we denote $f_i^{(k)} := \frac{\partial^k}{\partial x_1^k} f_i$. If these functions satisfy
\begin{align}\label{f-condition}
    \frac{\partial}{\partial x_{2k-1}} f_i 
    = \bigl(\frac{\partial}{\partial x_1}\bigr)^{2k-1} f_i,
      \qquad \text{for } k \in \Z_{>0},  
  \end{align}
then $2\frac{\partial^2}{\partial x_1^2} \log T(\vec x)$ satisfies the KdV hierarchy, where $T(\vec x)$ is defined
by
\begin{align}\label{tau-KP}
    T(\vec{x}) 
    :=
    \det \begin{pmatrix} 
           f_1 & f_1^{(1)} & \cdots & f_1^{(g-1)} \\
           f_2 & f_2^{(1)} & \cdots & f_2^{(g-1)} \\
           \vdots & \vdots & & \vdots\\
           f_g & f_g^{(1)} & \cdots & f_g^{(g-1)} 
         \end{pmatrix}.
\end{align}
In view of Lemma \ref{lemma:d-chi}, the functions $f_i:= \chi_{2g-2i+1}$ with $a_i=x_{2i-1}$ for $i=1,\cdots, g$,
satisfy~\eqref{f-condition}.  With this choice of functions,  $T(\vec{x})$ is precisely  $\tau_g(\vec{a})$, and the
result follows.

(2) In \cite[IIIa \S 10-11]{Mumford-Book}, Mumford shows, building upon the work \cite{McK-vM} of McKean-van
Moerbeke that a solution to the Mumford system, associated to an arbitrary \emph{smooth} hyperelliptic curve,
yields a solution to the KdV hierarchy.  In our case the hyperelliptic curve is not smooth, yet Mumford's argument
depends only on (differential) algebra, so we can construct as in the smooth case a rational solution $f$ to the
KdV equation from the rational solution which we constructed to the genus $g$ Mumford system.  Finally we obtain $f
= 2 u_{g-1}$, which leads precisely to the proposed solution.

(3) In the Grassmannian approach to the KdV equation \cite{sw}, 
to each point of the Sato (universal) Grassmannian
one associates a tau function,
whose second logarithmic derivative yields
a solution to the KdV hierarchy.
\footnote{Precisely,
this yields a solution to the KP hierarchy in general;
it is a solution to the KdV hierarchy 
iff it depends only on the odd-indexed variables.
}
If one takes the point of the Sato Grassmannian
corresponding to the partition $\nu = (g, g-1, \cdots, 1)$,
then the associated tau function is given by
the Schur function $F_{\nu}$ of $\nu$
(cf. \cite[\S 8]{sw}).
By the very definition \eqref{tau-rKdV},
we have an identity
$\tau_g(\vec{a})
=(-1)^{\frac{g(g+1)}{2}}F_{\nu}(a_1, 0, a_2, 0, a_3, \cdots)$, 
where $F_{\nu}$ is considered as a function 
in $t_1, t_2, \cdots$ through \cite[(8.4)]{sw}.
Thus, our function $\tau_g$,
which shares the same second logarithmic derivative
with $F_{\nu}$, 
yields a  rational solution to the KdV hierarchy.
%
%
%
%
%
%
\end{proof}
%


\begin{remark}
In \cite[p. 47-48]{sw}, a relation between the Sato Grassmannian and the compactified generalized
Jacobian $\bar{J}_g$ is discussed.  To be more precise, we introduce the Grassmannian $\Gr_g$ of $g$-dimensional
subspaces $W$ of $\C[t]/(t^{2g}) (\cong \C^{2g})$ satisfying $t^2 W \subset W$.  (One can consider $\Gr_g$ as a
subvariety of the usual Grassmannian $\Gr(g, 2g)$ or of the Sato Grassmannian.)  Then $\Gr_g$ admits a 
cell decomposition $\Gr_g=\sqcup_{k=0}^g\Gr_g^{(k)}$ with $\Gr_g^{(k)}\cong\C^k~(k=0,1,\cdots,g)$, and there exists
a bijective morphism $\nu_g: \Gr_g \to \bar{J}_g$ satisfying $\nu_g(\Gr_g^{(k)}) = (\pi_{k, g})_*(J_k)$ for all
${k=0, 1, \cdots, g}$ (cf. Lemma \ref{cpt}). In particular, the cell decomposition of the Grassmannian
corresponds to the stratification of the zero level set $M_g(0)$ of the genus $g$ Mumford system.  Note that
$\nu_g$ is not an isomorphism already in the case $g=1$.  It is conjectured in \cite{sw} that $\nu_g$ gives the
normalization of $\bar{J}_g$.
\end{remark}


\end{document}